\newtheorem{Thm}{Theorem}
\newtheorem{Pro}{Proposition}
\newtheorem{Lem}{Lemma}
\newtheorem{Cla}{Claim}
\newtheorem{Cor}{Corollary}
\newtheorem{Rem}{Remark}
\newtheorem{Ass}{Assumption}
\def\qed{\hfill \vrule height 6pt width 6pt depth 0pt}
\def\etal{{\it et al.}}
 \def\newblock{\ }%
\def\eps{\varepsilon}
\DeclareMathOperator{\var}{Var}
\def\q{\quad}
\journal{Stochastic Processes and their Applications}
\begin{document}

\begin{frontmatter}



\title{Volatility Inference in the Presence of Both  Endogenous Time and Microstructure Noise}


\author[1]{Yingying Li\fnref{label1}}
\address[1]{Department of Information Systems, Business Statistics and Operations Management, Hong Kong University of Science and Technology. Email: yyli@ust.hk}

\author[2]{Zhiyuan Zhang\fnref{label2}\corref{cor1}}
\address[2]{School of Statistics and Management, Shanghai University of Finance and Economics. Email: zhang.zhiyuan@mail.shufe.edu.cn}

\author[3]{Xinghua Zheng\fnref{label3}}
\address[3]{Department of Information Systems, Business Statistics and Operations Management, Hong Kong University of Science and Technology. Email: xhzheng@ust.hk}

\fntext[label1]{Research partially supported
by GRF 602710 of the HKSAR.}
\fntext[label2]{Research partially sponsored
by Shanghai Pujiang Program 12PJC051.}
\fntext[label3]{Research partially supported
by GRF 602710 and GRF 606811 of the HKSAR.}
 \cortext[cor1]{To whom correspondence should be addressed.}

\begin{abstract}
In this article we consider the volatility inference in the presence
of both market microstructure noise and endogenous time. Estimators
of the integrated volatility in such a setting are proposed, and
their asymptotic properties are studied. Our proposed estimator is
compared with the existing popular volatility estimators via
numerical studies. The results show that our estimator can have
substantially better performance when time endogeneity exists.
\end{abstract}

\begin{keyword}
It\^o Process \sep Realized Volatility \sep Integrated Volatility \sep Time Endogeneity \sep Market Microstructure Noise

\MSC: 60F05 \sep 60G44 \sep  62M09
\end{keyword}

\end{frontmatter}


\section{Introduction}
In recent years there has been growing interest in the inference for asset price volatilities
based on high-frequency financial data. Suppose that the latent  log price $X=(X_t)$ follows an It\^o process
\begin{align}
dX_t=\mu_t\ dt+\sigma_t\ dW_t,~{\rm for}~t\in[0,1],
\label{eq:latentP}
\end{align}
where $W$ is a standard Brownian motion, and the drift $(\mu_t)$ and volatility $(\sigma_t)$ are both stochastic processes. Econometric interests are usually in the inference for the integrated volatility, i.e., quadratic variation, of the  log price process
$$
\langle X,X\rangle_t=\int_0^t\sigma^2_s\ ds.
$$
A classical estimator from probability theory (see, for example,
\citet{jp98}, \citet{BarndShep}) for  this quantity is the realized volatility (RV)   based on the discrete time
observations
$$
X_{t_i}~{\rm for}~0=t_0<t_1<\ldots<t_{N_1}= 1,
$$
 where $t_i$'s may be a sequence of stopping times. The RV $[X,X]_t$ is defined as the sum of squared log returns
$$
[X,X]_t=\sum_{t_{i}\leq t}(\Delta X_{t_i})^2,
$$
where $\Delta X_{t_i}=X_{t_{i}}-X_{t_{i-1}}$ for $i\geq1$. Under mild conditions, when the observation frequency~$N_1$ goes to infinity, $[X,X]_t\overset{p}\longrightarrow \langle X,X\rangle_t$. Furthermore, when the observation times $(t_i)_{i\geq 0}$ are independent of~$X$, a complete asymptotic theory for the estimator $[X,X]_t$ is available, which says that $\sqrt{N_1}([X,X]_t-\langle X,X\rangle_t)$ is asymptotically a mixture of normal whose mixture component is the variance equal to $2\int_0^t\sigma^4_s\ dH_s$, where $H_t$ is the  ``quadratic variation of time''  process provided that the following limit exists (see \cite{MZ06} or  \cite{MZ2012})
$$
{\rm plim}_{N_1\rightarrow \infty}N_1\sum_{t_i\leq t}(\Delta t_i)^2=H_t,
$$
where ``plim'' stands for limit in probability.
The quantity  $\int_0^t\sigma^4_s\ dH_s$ can be consistently estimated by the quarticity $N_1/3\cdot [X,X,X,X]_t:=N_1/3\cdot\sum_{t_i\leq t}(\Delta X_{t_i})^4$.

The above provides
a foundation for estimating the integrated volatility based on high frequency data.
However, when it comes to the practical side, the assumptions for RV are often violated. Two aspects are of great importance. They are
\begin{itemize}[noitemsep,nolistsep]
\item[(a)] Market microstructure noise; \q and
\item[(b)] Endogeneity in the price sampling times.
\end{itemize}
For the first issue, recently there has seen a large literature on estimating quantities of interest with prices observed with microstructure noise. One commonly used assumption is that the noises are additive and one observes
\begin{align}
Y_{t_i}:=X_{t_i}+\eps_{t_i},~{\rm for}~i=0,1,\ldots,N_1.\label{eq:obsMSprice}
\end{align}
It is often assumed that the noise  $(\eps_{t_i})_{i\geq 1}$ is  an independent sequence of white
noise and the sampling times  $(t_i)_{i\geq 1}$ are independent of $X$.
Various estimators of integrated volatility have been proposed. See, for
example, two scales realized volatility of
\citet{ZhangMyklandAitSahalia},  multi-scale realized volatility
by \cite{Zhang},  realized kernels of \citet{NHLS},
pre-averaging method by \citet{JLMPV} and  QMLE method by
\cite{DachengXIU}.
Related works
include \cite{AMZ:2005}, \cite{bandirussell06}, \cite{fanwang07},  \citet{HL}, \cite{kalninalinton08},
\cite{limykland07}, \citet{PY2007} among others.

In contrast, issue (b) has only recently been brought to researchers' attention. The case when the sampling times
are irregular or random but (conditionally) independent of the price process has been studied by \citet{YacineMyklandA}, \citet{DG},
\citet{MRW},  \citet{HJY} among others.
A recent work of \citet{RENAULTandWERKER} provides a detailed discussion on the issue
of possible endogenous effect that stems from the price sampling
times in a semi-parametric context. \citet{lmrzz09} further
investigate the time endogeneity effect on volatility estimation in
a nonparametric setting. Volatility estimation in the presence of
endogenous time in some special situations like when the observation
times are hitting times has been studied in \cite{fukasawa10a} and
\cite{FukasawaRosenbaum}, and in a general situation has also been
studied in \citet{fukasawa10b}.
In \citet{lmrzz09},  the analysis was carried out by considering the time
endogeneity effect which is reflected by
\begin{equation}\label{eq:tricity}
{\rm plim}\sqrt{N_1}[X,X,X]_t
\end{equation}
where $\sqrt{N_1}[X,X,X]_t:=\sqrt{N_1}\sum_{t_i\leq t}(\Delta X_{t_i})^3$ is the tricity. Interestingly, the literature usually neglects the important information one could draw from the quantity $[X,X,X]_t$, which
can be interpreted as a measure of the covariance between the price process and time as shown in \citet{lmrzz09}.
\citet{lmrzz09} also conducted empirical work that provides compelling evidence that the endogenous effect does exist in financial data, i.e., ${\rm plim}\sqrt{N_1}[X,X,X]_t\neq 0.$

Although individually each issue (a) or (b) has been studied in the literature, there is a lack of studies that take both the microstructure noise and time endogeneity effect into consideration.  \cite{RobertRosenbaum} study the estimation of the integrated (co-)volatility for an interesting model where the observation times are triggered by exiting from certain ``uncertainty
zones'', in which case both microstructure noise and time
endogeneity may exist. In this paper, we consider the presence of
both  microstructure noise and time endogeneity in a general
setting.

The paper is organized as follows. The  setup and assumptions are given in Section \ref{sec:setup}. The main results are given in Section \ref{sec:mainresult}. In Section \ref{sec:simulation}, simulation studies are performed in which our proposed estimator is compared with several existing popular estimators. Section \ref{sec:conclusion} concludes. The proofs (except that of Proposition \ref{Prop:bias_corr} below) are given in the Appendix; the proof of Proposition \ref{Prop:bias_corr} is given in the supplementary article \cite{LZZlama_supp}.

\section{Setup and assumptions}\label{sec:setup}
\begin{Ass}\label{asmtn:overall}
We assume the   setting of (\ref{eq:latentP}) and (\ref{eq:obsMSprice}) and that there is a filtration $(\mathcal{F}_t)_{t\geq0},$ with respect to which $W,$ $\mu$ and $\sigma$ in (\ref{eq:latentP}) are adapted and $(t_i)_{i\geq1}$ are $(\mathcal{F}_t)$-stopping times. Furthermore, the filtration $(\mathcal{F}_t)$ is generated by finitely many continuous martingales.
\end{Ass}
In the Introduction, we adopted the notation $N_1$ for the number of observed prices over time interval $[0,1].$ Here, we generalize this and denote
$$
N_t=\max\{i:t_i\leq t\}.
$$
In developing limiting results, one should be able to rely on some index variable approaching infinity/zero. In our context, we assume that $\max_{i} \Delta t_i\overset{p}\rightarrow0$ is driven by some underlying force, for instance, $n\rightarrow\infty,$ where $n$ (non-random) characterizes the  sampling frequency over time interval $[0,1].$

We aim at  effectively estimating  $\langle X,X\rangle_t$ based on our general setup.  A local averaging approach is adopted. We consider the time endogeneity  on the sub-grid level. Take the single sub-grid case for illustration,  the  sub-sample $\mathcal{S}=\mathcal{S}_0:=\{t_p,t_{p+q},\ldots,t_{p+iq},\ldots\}$ is constructed by choosing every $q$th observation (starting from the $p$th observation) from the complete grid. Here $p$ is the number of observations that we take in constructing local average,  $q$ is the size of blocks, and both are non-random numbers just as $n$.
Define
\begin{align}
\ell:=\lfloor\frac{n-p}{q}\rfloor,\label{eq:gpqrelationship}
\end{align}
which satisfies that $\ell q\leq n$, and as $p$ shall be taken as $o(n)$, $\ell q/n\to 1$ as $n\to\infty$.
As $n$ measures the sampling frequency of the complete grid,   $\ell$ measures that of the sub-grid~$\mathcal{S}$. Moreover, for notational ease, for $k=0,1,\ldots, q-1$, we define
\begin{equation}\label{dfn:t_ijk}
t_{i,j}^k:=t_{iq+p-j + k},\q \mbox{for } i=0,1,2,\ldots, \q\mbox{and }j=0,1,2,\ldots,p-1,
\end{equation}
and let
\[
t_{i,j}=t_{i,j}^0.
\]
Analogous to (\ref{eq:tricity}), we consider the quantity  $\sqrt{\ell}[X,X,X]_t^{\mathcal{S}}=\sqrt{\ell}\sum_{t_{i,0}\leq t}(X_{t_{i,0}}-X_{t_{i-1,0}})^3$.
The superscript $\mathcal{S}$ indicates the calculation being performed is based on the designated sub-grid. This convention applies to other sub-grids. Moving the sub-grid~$\mathcal{S}$ one step forward forms sub-grid $\mathcal{S}_1$, continuing this process gives sub-grid $\mathcal{S}_2$ and so on till the $(q-1)$th sub-grid $\mathcal{S}_{q-1}$. Figure \ref{fig:LAMA_grid_allo} provides a graphical demonstration of our grid allocation. Further, on the sub-grid $\mathcal{S}$, we define the number of observations up to time $t$ in sub-grid $\mathcal{S}$ as
$$
L_t:=\max\{i:t_{i,0}\leq t\}.
$$
Naturally, $L_1$ and $N_1$ satisfy $L_1\leq N_1/q.$
\begin{figure}[H]
    \centering
    \makebox{\includegraphics[width=6in, height=1.8in]
    {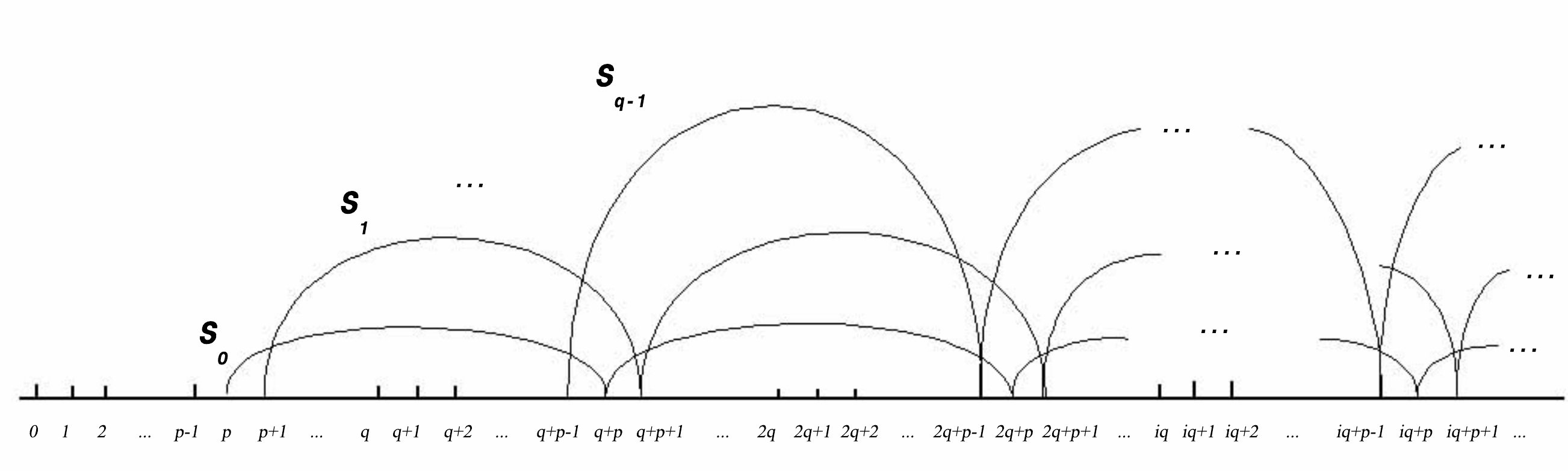}}
    \caption{\label{fig:LAMA_grid_allo}Grid allocation for Local Averaging.}
\end{figure}

\section{Main results}\label{sec:mainresult}
We start with results based on a single sub-grid and then proceed to the multiple sub-grids case.

\subsection{Single sub-grid: Local Averaging}

A natural and effective way of reducing the effect of microstructure noise in estimating $\langle X,X\rangle_t$ is averaging, see, e.g., \cite{JLMPV} and \cite{PV2009}. Following \citet{JLMPV}, we average every $p$ observations that precede each observation in the sub-sample $\mathcal{S}$ to obtain a new sequence of observations, which we denote by $(\overline{Y}_{t_{i,0}})_{i\geq 0}.$ Based on this sequence of observations, we obtain a single-grid biased local averaging estimator. To be specific,
\begin{align}
\overline{Y}_{t_{i,0}}=\frac{1}{p}\sum_{j=0}^{p-1}Y_{t_{i,j}},~{\rm for}~i=0,1,2,\ldots.\notag
\end{align}
The RV based on the $\overline{Y}$ sequence is denoted by
$$[\overline{Y},\overline{Y}]^{\mathcal{S}}_t=\sum_{t_{i,0}\leq t}(\Delta \overline{Y}_{t_{i,0}})^2,$$ where $\Delta \overline{Y}_{t_{i,0}}=\overline{Y}_{t_{i,0}}-\overline{Y}_{t_{i-1,0}}$ for $i\geq1$. After correcting the bias due to noise, the single-grid local averaging estimator is defined as
\begin{align}
\widehat{\langle X,X\rangle}_t^{LA}:=[\overline{Y},\overline{Y}]^{\mathcal{S}}_t-\frac{2L_t}{p}\widehat{\sigma_{\eps}^2},~{\rm for}~t\in[0,1],\label{eq:LAestimator}
\end{align}
where
\begin{align}
\widehat{\sigma_{\eps}^2}=[Y,Y]_1/2N_1,\label{eq:epsestimator}
\end{align}
is an estimator of $\sigma_\eps^2$, see Lemma \ref{Lem:ConvgcOfsigeps} in \ref{subsec:Prerequisitesandlemma}, and $[Y,Y]_1$ is the RV based on all observations up to time $1$.  We now state conditions that lead to the theorem for the single sub-grid case:
\begin{itemize}[noitemsep,nolistsep]
\item[C(1).] $\mu_t$ and $\sigma^2_t\geq c>0$ are  integrable and locally bounded;
\item[C(2).] $n/N_1=O_p(1)$;
\item[C(3).] $\Delta_n:=\max_{1\leq i\leq N_1}|t_i-t_{i-1}|=O_p(1/n^{1-\eta})$ for some nonnegative constant~$\eta$;
\item[C(4).] $L_t/\ell\overset{p}\longrightarrow \int_0^tr_sds~{\rm in}~D[0,1],$ where $r_s$ is an adapted integrable process (and hence in particular, $N_1/n=O_p(1)$);
\item[C(5).] The microstructure noise sequence $(\eps_{t_i})_{i\geq 0}$ consists of independent random variables with mean~0, variance $\sigma_{\eps}^2$, and common finite third and forth moments, and is independent of $\mathcal{F}_1$.
\end{itemize}
The following theorem characterizes the asymptotic property of the estimator (\ref{eq:LAestimator}).

\begin{Thm}\label{thm:single_grid}
Assume Assumption \ref{asmtn:overall} and conditions C(1)$\sim$ C(5). Suppose that $\eta\in[0,1/6)$, and $\ell\sim C_\ell n^{\alpha}$ and $p\sim C_pn^{\alpha}$ for some  $0<\alpha<2(1-\eta)/5$ and positive constants $C_\ell$ and $C_p$, and also that
\begin{align}
&\ell[X,X,X,X]_{t}^{\mathcal{S}}\overset{p}\longrightarrow\int_0^{t}u_s\sigma_s^4\ ds~ {\rm for~every~}t\in[0,1],\quad ~{\rm and} \label{quarticity}\\
&\sqrt{\ell}[X,X,X]_{t}^{\mathcal{S}}\overset{p}\longrightarrow\int_0^{t}v_s\sigma_s^3\ ds~{\rm for~every~}t\in[0,1],\label{tricity}
\end{align}
where $[X,X,X,X]_{t}^{\mathcal{S}}=\sum_{t_{i,0}\leq t}(X_{t_{i,0}}-X_{t_{i-1,0}})^4$, and $u_s\sigma_s^4$ and $v_s^2\sigma_s^4$ are both integrable. Then, stably in law,
\begin{align}
\sqrt{\ell}\left(\widehat{\langle X,X\rangle}_t^{LA}-\langle X,X\rangle_t\right)&\Longrightarrow \underbrace{\frac{2}{3}\int_0^tv_s\sigma_sdX_s}_{asymptotic~bias}+\notag\\
&\int_0^t\left[\left(\frac{2}{3}u_s-\frac{4v_s^2}{9}\right)\sigma_s^4+12r_s\left(\frac{C_\ell}{C_p}\sigma_\eps^2\right)^2+8\frac{C_\ell}{C_p}\sigma_s^2\sigma_\eps^2\right]^{1/2}dB_s\notag
\end{align}
where $B_t$ is a standard Brownian motion independent of
$\mathcal{F}_1$.
\end{Thm}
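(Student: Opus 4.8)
The plan is to center the estimator, split it into three additive pieces (a pure--signal term, a signal$\times$noise cross term, and a pure--noise term), determine which piece carries the asymptotic bias, and then glue the centered parts together with a joint stable central limit theorem for triangular arrays. Writing $\overline{Y}_{t_{i,0}}=\overline{X}_{t_{i,0}}+\overline\varepsilon_{t_{i,0}}$ with $\overline{X}_{t_{i,0}}=p^{-1}\sum_{j=0}^{p-1}X_{t_{i,j}}$ and $\overline\varepsilon_{t_{i,0}}=p^{-1}\sum_{j=0}^{p-1}\varepsilon_{t_{i,j}}$, I would expand
\[
\Big([\overline{Y},\overline{Y}]^{\mathcal S}_t-\tfrac{2L_t}{p}\widehat{\sigma_\varepsilon^2}\Big)-\langle X,X\rangle_t
=\Big([\overline{X},\overline{X}]^{\mathcal S}_t-\langle X,X\rangle_t\Big)
+2\sum_{t_{i,0}\le t}\Delta\overline{X}_{t_{i,0}}\,\Delta\overline\varepsilon_{t_{i,0}}
+\Big(\sum_{t_{i,0}\le t}(\Delta\overline\varepsilon_{t_{i,0}})^2-\tfrac{2L_t}{p}\widehat{\sigma_\varepsilon^2}\Big).
\]
Since $\alpha<2(1-\eta)/5<1/2$, the stride $q\sim n/\ell$ dominates the window length $p$, so the averaging windows attached to consecutive sub-grid points are disjoint; consequently $\var(\Delta\overline\varepsilon_{t_{i,0}})=2\sigma_\varepsilon^2/p$ and the deterministic part of the noise term is exactly $\tfrac{2L_t}{p}\sigma_\varepsilon^2$, matching the correction. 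I would first remove $\widehat{\sigma_\varepsilon^2}$: by Lemma \ref{Lem:ConvgcOfsigeps}, $\widehat{\sigma_\varepsilon^2}-\sigma_\varepsilon^2=O_p(N_1^{-1/2})$, while $\sqrt\ell\,(L_t/p)=O_p(n^{\alpha/2})$, so the replacement error is $O_p(n^{(\alpha-1)/2})=o_p(1)$ and $\widehat{\sigma_\varepsilon^2}$ may be treated as the constant $\sigma_\varepsilon^2$ throughout.

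For the cross and noise terms the work is routine. Freezing the adapted signal factor at the left block endpoint, each summand becomes (up to negligible remainders) a martingale difference with respect to $(\mathcal F_{t_{i,0}})$. A direct conditional-moment computation using C(5) and the window disjointness gives, after multiplying by $\ell$, the predictable quadratic variations $\ell\sum\mathbb{E}\big[(2\Delta\overline{X}\,\Delta\overline\varepsilon)^2\mid\mathcal F_{t_{i-1,0}}\big]\overset{p}{\to}\int_0^t 8\tfrac{C_\ell}{C_p}\sigma_s^2\sigma_\varepsilon^2\,ds$ and $\ell\sum\var\big((\Delta\overline\varepsilon)^2\mid\mathcal F_{t_{i-1,0}}\big)\overset{p}{\to}\int_0^t 12\,r_s\big(\tfrac{C_\ell}{C_p}\sigma_\varepsilon^2\big)^2\,ds$, where the limiting time density $r_s$ comes from C(4) and the ratios from $\ell/p\to C_\ell/C_p$; the constants $8$ and $12$ are the relevant moments of $\varepsilon$ and the fourth central moment of $\Delta\overline\varepsilon$. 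These two arrays are asymptotically uncorrelated with each other and with the signal martingale.

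The delicate step is the signal term $\sqrt\ell\big([\overline{X},\overline{X}]^{\mathcal S}_t-\langle X,X\rangle_t\big)$, which produces both the bias and the $\big(\tfrac23 u_s-\tfrac{4v_s^2}{9}\big)\sigma_s^4$ component. After a second-order It\^o expansion of $(\Delta\overline{X}_{t_{i,0}})^2$ around the block endpoints and cancellation of the $\int\sigma^2$ part, its summands are leading-order martingale differences $m_i$. Two features must be extracted. The predictable quadratic variation $\sum\mathbb{E}[m_i^2\mid\mathcal F_{t_{i-1,0}}]$ converges, via hypothesis (\ref{quarticity}) and the averaging weights, to $\tfrac23\int_0^t u_s\sigma_s^4\,ds$. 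The covariation with the driving Brownian motion, $\sum\mathbb{E}[m_i\,\Delta W_{t_{i,0}}\mid\mathcal F_{t_{i-1,0}}]$, does \emph{not} vanish under endogenous sampling: it is precisely the third-order object controlled by hypothesis (\ref{tricity}) and converges to $\tfrac23\int_0^t v_s\sigma_s^2\,ds$. By the stable CLT this covariation yields the $\mathcal F_1$-measurable, price-correlated bias $\tfrac23\int_0^t v_s\sigma_s\,dX_s$, whose $dW$-coefficient $\tfrac23 v_s\sigma_s^2$ has ``variance density'' $\tfrac49 v_s^2\sigma_s^4$; subtracting this already-explained variance from the total $\tfrac23 u_s\sigma_s^4$ leaves exactly the orthogonal, conditionally Gaussian density $\big(\tfrac23 u_s-\tfrac49 v_s^2\big)\sigma_s^4$. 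This decomposition is what makes the $-\tfrac{4v_s^2}{9}$ correction appear naturally.

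Finally I would combine the three martingale-difference arrays and invoke a stable CLT for triangular arrays (Jacod's theorem): the predictable variances add to the bracketed integrand, the Lyapunov/negligibility of fourth moments follows from $\Delta_n=O_p(n^{-(1-\eta)})$ together with $\eta<1/6$ and $\alpha<2(1-\eta)/5$, which are precisely the rates forcing all higher-order It\^o remainders and window-boundary effects to be $o_p(1)$, and stable convergence with $B\perp\mathcal F_1$ follows by checking asymptotic orthogonality to every bounded martingale, reduced to finitely many checks by Assumption \ref{asmtn:overall}. I expect the signal term to be the main obstacle: one must track how the uniform window weights interact with the third-order (tricity) structure to produce exactly the factor $2/3$ in both the bias and the covariation, and simultaneously verify that the cross moments between the tricity- and quarticity-driven pieces close up into the $-\tfrac{4v_s^2}{9}\sigma_s^4$ term rather than generating spurious contributions, all while the endogenous stopping times prevent any naive conditioning argument and demand careful control of the remainders under the stated rate window.
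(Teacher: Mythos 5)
You follow the same architecture as the paper: split into signal, cross and noise pieces, discard $\widehat{\sigma_\eps^2}-\sigma_\eps^2$ via Lemma~\ref{Lem:ConvgcOfsigeps}, prove martingale CLTs for the noise pieces, and extract the bias and the $-\tfrac{4}{9}v_s^2$ correction from the joint stable limit of the signal martingale with $X$; your identification of the bias as the projection $\tfrac{2}{3}\int_0^t v_s\sigma_s\,dX_s$ and of the orthogonal variance $(\tfrac{2}{3}u_s-\tfrac{4}{9}v_s^2)\sigma_s^4$ is exactly the right mechanism (the paper imports this step wholesale from Theorem~1 of \citet{lmrzz09} rather than re-deriving it). Two concrete gaps remain.

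First, your signal term is $[\overline{X},\overline{X}]^{\mathcal S}_t$, but hypotheses \eqref{quarticity}--\eqref{tricity} are stated for the \emph{endpoint}-subsampled process: $[X,X,X,X]^{\mathcal S}$ and $[X,X,X]^{\mathcal S}$ are built from $X_{t_{i,0}}-X_{t_{i-1,0}}$, not from increments of window averages. So your plan needs the reduction $[\overline{X},\overline{X}]^{\mathcal S}_t=[X,X]^{\mathcal S}_t+o_p(\ell^{-1/2})$, and that is precisely where the paper does its hardest quantitative work (Step~2): writing $\overline{X}_{t_{i,0}}=X_{t_{i,0}}+C_i$ with $C_i=-\sum_{j=2}^{p}\tfrac{j-1}{p}\Delta X_{t_{iq+j}}$ and bounding $\sum(\Delta C_i)^2$, $\sum\Delta A_i\Delta C_i$, $\sum\Delta B_i\Delta C_i$ by BDG with stopping times and Lemma~\ref{lem:rnd_sum}, obtaining $O_p(\ell p/n^{1-\eta})+O_p(\sqrt{p/n^{1-2\eta}})$. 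Demanding $\sqrt{\ell}$ times this vanish is exactly what produces $\alpha<2(1-\eta)/5$, and comparing the two error exponents is where $\eta<1/6$ enters (for $\eta<1/6$ the first constraint is the binding one). Your sketch disposes of this as ``window-boundary effects are $o_p(1)$,'' i.e., the step at which the theorem's rate window actually binds is asserted rather than proved.

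Second, the noise term: the summands $(\Delta\bar{\eps}_{t_{i,0}})^2-2\sigma_\eps^2/p$ are $1$-dependent, \emph{not} martingale differences, since $E[(\Delta\bar{\eps}_{t_{i,0}})^2\mid\mathcal F_{t_{i-1,0}}]=\sigma_\eps^2/p+\bar{\eps}_{t_{i-1,0}}^2\neq 2\sigma_\eps^2/p$. Read literally, your display $\ell\sum\var\big((\Delta\bar{\eps})^2\mid\mathcal F_{t_{i-1,0}}\big)$ converges to $6r_s\big(\tfrac{C_\ell}{C_p}\sigma_\eps^2\big)^2$, not $12$: it misses the fluctuation of the compensator $\sum\bar{\eps}_{t_{i-1,0}}^2$ and its covariance with the martingale part. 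You state the correct total $12$, so the constant is right, but the stated justification would not deliver it. The repair is the paper's exact rearrangement $[\bar{\eps},\bar{\eps}]^{\mathcal S}_t-\tfrac{2L_t}{p}\sigma_\eps^2=2\sum\big(\bar{\eps}_{t_{i,0}}^2-\sigma_\eps^2/p\big)-2\sum\bar{\eps}_{t_{i-1,0}}\bar{\eps}_{t_{i,0}}+\text{edge terms}$, two asymptotically orthogonal martingales with brackets $2\big(\tfrac{C_\ell}{C_p}\sigma_\eps^2\big)^2\int_0^t r_s\,ds$ and $\big(\tfrac{C_\ell}{C_p}\sigma_\eps^2\big)^2\int_0^t r_s\,ds$, giving $4\cdot 2+4\cdot 1=12$. (The analogous Abel rearrangement $\sum(\Delta X_{t_{i,0}}-\Delta X_{t_{i+1,0}})\bar{\eps}_{t_{i,0}}$ is how the paper handles the cross term; there your conditioning argument does survive and the constant $8$ is correct.) With these two repairs your argument coincides with the paper's proof.
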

Proof of the theorem is given in \ref{ssec:pf_single_grid}.

In the literature it is often assumed that the mesh $\Delta_n=O_p(1/n)$, in other words, $\eta=0$ in Condition C(4). In this case,
the convergence rate in Theorem \ref{thm:single_grid} can be arbitrarily close to $n^{1/5}.$

\begin{Rem}
\label{Rem:Sub-grid Endogeneity}
Unlike in the full grid setting where a nonzero limit of tricity can be easily generated by letting the sampling times be hitting times of asymmetric barriers (see for instance Examples 4 \& 5 of \cite{lmrzz09}), in the subgrid case a nonzero limit of tricity is far less common, and in particular under the settings of both Examples 4 \& 5 of \cite{lmrzz09}, the limit in $\eqref{tricity}$ vanishes. However as we found in simulation studies (not all reported), even in these situations, adopting the (finite sample) bias correction discussed in Section \ref{ssec:bias_cor} below can substantially reduce the (finite sample) bias. Similar remark applies to the estimator in Theorem \ref{thm:multiple_grids} below.
\end{Rem}

\subsection{Multiple sub-grids: Moving Average}
We show in this subsection that for any $\eps>0$, rate $n^{1/4-\eps}$ consistency can be achieved by using moving average based on multiple sub-grids. For that purpose, we need such notations as $[\overline{Y},\overline{Y}]^{\mathcal{S}_k}_t$, i.e. the RV of locally averaged $Y$ process over the $k$th sub-grid, for the same operations that are performed over the $0$th sub-grid $\mathcal{S}=\mathcal{S}_0$ being adjusted to the $k$th sub-grid $\mathcal{S}_k$. To be specific, we take $[\overline{Y},\overline{Y}]^{\mathcal{S}_k}_t$ for example; other notation with superscript $k$ or $\mathcal{S}_k$ has similar interpretation. Similar to the definition of $[\overline{Y},\overline{Y}]^{\mathcal{S}}_t$ (i.e. $[\overline{Y},\overline{Y}]^{\mathcal{S}_0}_t$), we first define
$$
\overline{Y}_{t_{i,0}^k}:=\frac{1}{p}\sum_{j=0}^{p-1}Y_{t_{iq+p+k-j}},~{\rm for}~i=0,1,2,\ldots,
$$
where, recall that $t_{i,0}^k=t_{iq+p+k}$ denotes the $i$th observation time on the $k$th sub-grid. The RV of locally averaged $Y$ process over the $k$th sub-grid is defined as follows
$$
[\overline{Y},\overline{Y}]^{\mathcal{S}_k}_t:=\sum_{t_{i,0}^k\leq t}(\Delta\overline{Y}_{t_{i,0}^k})^2,
$$
where $\Delta\overline{Y}_{t_{i,0}^k}=\overline{Y}_{t_{i,0}^k}-\overline{Y}_{t_{i-1,0}^k}$ for $i\geq1$. Assume the following conditions that lead to the asymptotic result on multiple sub-grids:
\begin{itemize}[noitemsep,nolistsep]
\item[C(6).] $\ell\sum_{t_{i}\leq t}\left(\sum_{j=1}^{q-1}\frac{q-j}{q}\Delta X_{t_{i-j}}\right)^2(\Delta X_{t_i})^2\overset{p}\longrightarrow\int_0^{t}w_s\sigma_s^4\ ds$ for every $t\in[0,1]$, where $w_s\sigma_s^4$ is integrable;
\item[C(7).] $\frac{1}{q}\sum_{k=0}^{q-1}\sqrt{\ell}[X,X,X]_{t}^{\mathcal{S}_k}\overset{p}\longrightarrow\int_0^{t}\bar{v}_s\sigma_s^3\ ds$ for every $t\in[0,1]$, where $\bar{v}_s^2\sigma_s^4$ is integrable.
\end{itemize}
Define
$$
A(p,q):=\frac{2}{q}\sum_{j=1}^{p-1}\left(\frac{j^2}{p^2}-\frac{j}{p}\right).
$$
Under the conditions of Theorem \ref{thm:multiple_grids} below, $A(p,q)\sim -n^{4\alpha-2}C_\ell C_p/3.$
\begin{Thm}\label{thm:multiple_grids}
Assume Assumption 1 and conditions C(1) to C(7). Suppose that $\eta\in[0,1/9)$, and $\ell\sim C_\ell n^{\alpha}$ and $p\sim C_pn^{3\alpha-1}$ for some $\max(4\eta,1/3)<\alpha<(1-\eta)/2$ and positive constants $C_\ell$ and $C_p$. Then, stably in law,
\begin{align}
&~~~\sqrt{\ell}\left(\frac{1}{q}\sum_{k=0}^{q-1}[\overline{Y},\overline{Y}]^{\mathcal{S}_k}_t-\frac{2N_t}{pq}\widehat{\sigma_{\eps}^2}-(1+A(p,q))\langle X,X\rangle_t\right)\notag\\
&\Longrightarrow \underbrace{\frac{2}{3}\int_0^t\bar{v}_s\sigma_s\ dX_s}_{asymptotic~bias}+\int_0^t\left[\left(4w_s-\frac{4}{9}\bar{v}_s^2\right)\sigma_s^4+\frac{8C_\ell^3}{C_p}r_s(\sigma_\eps^2)^2\right]^{1/2}~dB_s\notag,
\end{align}
where $B_t$ is a standard Brownian motion that is independent of $\mathcal{F}_1$.
\end{Thm}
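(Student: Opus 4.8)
\section*{Proof proposal}

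The plan is to reduce everything to the latent price by writing $Y=X+\varepsilon$ and $\overline Y=\overline X+\overline\varepsilon$, so that on each sub-grid $(\Delta\overline Y_{t_{i,0}^k})^2=(\Delta\overline X_{t_{i,0}^k})^2+2\Delta\overline X_{t_{i,0}^k}\Delta\overline\varepsilon_{t_{i,0}^k}+(\Delta\overline\varepsilon_{t_{i,0}^k})^2$. Averaging over $k=0,\dots,q-1$ and over blocks, the statistic splits into a \emph{signal} part $\frac1q\sum_k\sum_i(\Delta\overline X_{t_{i,0}^k})^2$, a \emph{cross} part, and a \emph{noise} part. The noise correction $\frac{2N_t}{pq}\widehat{\sigma_\varepsilon^2}$ in the statement is tailored to cancel the leading conditional mean of the noise part, while the deterministic factor $1+A(p,q)$ absorbs the shrinkage of the increment variance caused by local averaging; the content of the theorem is that, after these two corrections, the remaining $\sqrt\ell$-scaled fluctuations converge stably to the stated mixed Gaussian law plus the endogeneity-driven bias. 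I would treat the three parts separately and then combine them through a single stable central limit theorem for the associated martingale-difference array.

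The core is the signal part. First I would express each pre-averaged increment $\Delta\overline X_{t_{i,0}^k}$ as an explicit weighted sum of the \emph{raw} increments $\Delta X_{t_m}$ (a finite-window, triangular-weight kernel arising from differencing two block averages). Squaring and summing over $(k,i)$ produces diagonal contributions $\sum_m c_m(\Delta X_{t_m})^2$ and off-diagonal contributions $\sum_{m\ne m'}(\cdots)\Delta X_{t_m}\Delta X_{t_{m'}}$. The diagonal part, using C(1) and the occupation limit C(4), converges to $(1+A(p,q))\langle X,X\rangle_t$, exactly the centering in the theorem; here the definition of $A(p,q)$ and its asymptotics $A(p,q)\sim -n^{4\alpha-2}C_\ell C_p/3$ are used to bound the residual diagonal fluctuation. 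The off-diagonal part is where the moving average acts, and I would split it into an \emph{even} part and an \emph{odd} part. The even part is a martingale-difference sum whose predictable quadratic variation, after the combinatorics of the shifting windows is carried out, is governed precisely by C(6); this produces the $4w_s\sigma_s^4$ contribution, while the $-\frac49\bar v_s^2\sigma_s^4$ correction is the quadratic variation of the bias term that is split off and kept as a separate summand. The odd part does not vanish under endogenous sampling: its aggregate is controlled by the sub-grid tricity limit C(7) and, after rescaling, converges to the asymptotic bias $\frac23\int_0^t\bar v_s\sigma_s\,dX_s$, which collapses to zero exactly when $\bar v\equiv0$ (cf. Remark \ref{Rem:Sub-grid Endogeneity}).

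For the noise part I would compute $\mathbb{E}[(\Delta\overline\varepsilon_{t_{i,0}^k})^2\mid\mathcal F_1]$ from C(5) together with the overlap pattern of the averaging windows across the $q$ sub-grids, match its leading term with $\frac{2N_t}{pq}\widehat{\sigma_\varepsilon^2}$ (Lemma \ref{Lem:ConvgcOfsigeps} giving $\widehat{\sigma_\varepsilon^2}\to\sigma_\varepsilon^2$), and then identify the variance of the residual, which yields the $\frac{8C_\ell^3}{C_p}r_s(\sigma_\varepsilon^2)^2$ term via C(4); the key order count here is that the correlation length in the shift index $k$ is $\sim p$, so $\mathrm{Var}$ of the $\sqrt\ell$-scaled noise part is $\sim \ell^2/(qp)=O(1)$ and tends to $C_\ell^3/C_p$ (up to the numerical factor). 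The cross part $2\Delta\overline X\,\Delta\overline\varepsilon$ has conditional mean zero, and under the scaling $p\sim C_pn^{3\alpha-1}$ its unscaled variance is $o(1/\ell)$ once the same $k$-correlation-length argument is applied, so after multiplication by $\sqrt\ell$ it vanishes in $L^2$ — this is precisely why the $\sigma_s^2\sigma_\varepsilon^2$ cross term present in Theorem \ref{thm:single_grid} disappears here, and I would verify the order count explicitly since it is delicate. Throughout, the constraints $\eta\in[0,1/9)$ and $\max(4\eta,1/3)<\alpha<(1-\eta)/2$ are exactly what is needed to make every discretization remainder (from freezing $\sigma$ at block left-endpoints and from $\Delta_n=O_p(n^{-(1-\eta)})$) negligible after multiplication by $\sqrt\ell$.

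Having written the centered, $\sqrt\ell$-scaled statistic as a negligible remainder, an explicit bias term converging to $\frac23\int_0^t\bar v_s\sigma_s\,dX_s$, and a martingale-difference array, I would invoke a stable CLT of the Jacod type: it suffices to show the predictable quadratic variation converges in probability to $\int_0^t[(4w_s-\frac49\bar v_s^2)\sigma_s^4+\frac{8C_\ell^3}{C_p}r_s(\sigma_\varepsilon^2)^2]\,ds$, that the predictable covariation with $W$ and with every bounded martingale orthogonal to $W$ vanishes, and that a conditional Lindeberg bound holds; the $\mathcal F_1$-measurability of the limiting variance and the independence of $B$ from $\mathcal F_1$ then give stable convergence. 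The main obstacle is the off-diagonal signal analysis: organizing the cross-block products generated by the moving average so that their even part reproduces exactly the C(6) weights $\frac{q-j}{q}$ and their odd part reproduces exactly the $\frac23$ coefficient of the tricity bias, while keeping all mixed discretization/endogeneity remainders below the $1/\sqrt\ell$ threshold. This combinatorial-plus-martingale bookkeeping, rather than the noise handling or the final CLT, is the delicate heart of the proof.
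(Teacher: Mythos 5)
Your proposal is correct and follows essentially the same route as the paper's own proof: the same signal/cross/noise decomposition in which the diagonal terms produce the $(1+A(p,q))$ centering, the cross term $2\Delta\overline{X}\,\Delta\bar{\eps}$ is killed by the same order count (requiring $\alpha<(1-\eta)/2$), the noise martingale delivers the $\frac{8C_\ell^3}{C_p}r_s(\sigma_\eps^2)^2$ variance via $\ell^2/(pq)\to C_\ell^3/C_p$, and C(6)/C(7) give the $4w_s\sigma_s^4$ quadratic variation and the tricity-driven bias, combined through a stable martingale CLT with the bias's quadratic variation $\frac{4}{9}\bar{v}_s^2\sigma_s^4$ subtracted from the orthogonal Gaussian part. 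The only cosmetic difference is that the paper packages the signal part as the continuous-time martingale $\overline{M}_t=2\sqrt{\ell}\int_0^t f_n(s)\,dX_s$, computes $\langle\overline{M},\overline{M}\rangle$ by an It\^o identity for $f_n^4$ and $\langle X,\overline{M}\rangle$ directly, and reads off the bias/orthogonal-Gaussian split from Zhang's stable limit theorem, rather than your discrete even/odd splitting with explicit bias subtraction.
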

Proof of the theorem is given in \ref{ssec:pf_multiple_grid}.

If one assumes that $\Delta_n=O_p(1/n)$, then $\eta=0$, and the convergence rate in the above theorem can be arbitrarily close to $n^{1/4}.$

\begin{Rem}
If times are exogenous, Condition C(6) can be reduced to a similar assumption as (48) on p.1401 of
\cite{ZhangMyklandAitSahalia}. The limit is then related to quarticity and
can be consistently estimated, see, e.g., \cite{JLMPV}, \cite{NHLS}. In general, when observation times can be endogenous,
the limit is expected to be different.
\end{Rem}

\subsection{Bias Correction}\label{ssec:bias_cor}
Since the estimator constructed based on multiple grids achieves a better rate of convergence, below we shall mainly focus on the moving average setting. Based on the above result, we have the following (infeasible) unbiased  estimator:
$$
\widehat{\langle X,X\rangle}^{(0)}_1:=\frac{-\frac{2}{3}\int_0^1\bar{v}_s\sigma_sdX_s+\sqrt{\ell}\left(\frac{1}{q}\sum_{k=0}^{q-1}[\overline{Y},\overline{Y}]^{\mathcal{S}_k}_1-\frac{2N_1}{pq}\widehat{\sigma_{\eps}^2}\right)}{\sqrt{\ell}(1+A(p,q))}.
$$
The following Corollary describes the asymptotic property for this  estimator.
\begin{Cor}
Under the assumptions of Theorem 2, stably in law,
\begin{align}
\sqrt{\ell}\left(\widehat{\langle X,X\rangle}^{(0)}_1-\langle X,X\rangle_1\right)\Longrightarrow\int_0^1\left[\left(4w_s-\frac{4}{9}\bar{v}_s^2\right)\sigma_s^4+\frac{8C_\ell^3}{C_p}r_s(\sigma_\eps^2)^2\right]^{1/2}~dB_s,\notag
\end{align}
where $B_t$ is a standard Brownian motion that is independent of
$\mathcal{F}_1$.
\label{Cor:1}
\end{Cor}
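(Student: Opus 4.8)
The plan is to treat Corollary \ref{Cor:1} as an essentially algebraic consequence of Theorem \ref{thm:multiple_grids} evaluated at $t=1$, the only genuine ingredient being the \emph{stable} nature of the convergence established there. Abbreviate the integrand as $G_s:=(4w_s-\frac{4}{9}\bar{v}_s^2)\sigma_s^4+\frac{8C_\ell^3}{C_p}r_s(\sigma_\eps^2)^2$ and write
$$
Z_n:=\sqrt{\ell}\left(\frac{1}{q}\sum_{k=0}^{q-1}[\overline{Y},\overline{Y}]^{\mathcal{S}_k}_1-\frac{2N_1}{pq}\widehat{\sigma_{\eps}^2}-(1+A(p,q))\langle X,X\rangle_1\right)
$$
for the quantity appearing on the left-hand side of Theorem \ref{thm:multiple_grids} at $t=1$, so that by that theorem $Z_n\Longrightarrow \frac{2}{3}\int_0^1\bar{v}_s\sigma_s\,dX_s+\int_0^1 G_s^{1/2}\,dB_s$ stably in law.

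First I would simply unwind the definition of $\widehat{\langle X,X\rangle}^{(0)}_1$. Multiplying its defining identity through by $\sqrt{\ell}(1+A(p,q))$ and substituting $\sqrt{\ell}\big(\frac{1}{q}\sum_{k}[\overline{Y},\overline{Y}]^{\mathcal{S}_k}_1-\frac{2N_1}{pq}\widehat{\sigma_{\eps}^2}\big)=Z_n+\sqrt{\ell}(1+A(p,q))\langle X,X\rangle_1$ produces, after the $\langle X,X\rangle_1$ terms cancel,
$$
\sqrt{\ell}\left(\widehat{\langle X,X\rangle}^{(0)}_1-\langle X,X\rangle_1\right)=\frac{Z_n-\frac{2}{3}\int_0^1\bar{v}_s\sigma_s\,dX_s}{1+A(p,q)}.
$$
Under the hypotheses of Theorem \ref{thm:multiple_grids} one has $\alpha<(1-\eta)/2\leq 1/2$, hence $4\alpha-2<0$ and $A(p,q)\sim -n^{4\alpha-2}C_\ell C_p/3\to 0$; thus the denominator tends to $1$ and can be discarded by a Slutsky argument once the numerator is controlled.

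The main (and essentially only) subtlety is the \emph{exact} cancellation of the drift term in the numerator, and this is precisely where stability is indispensable. The subtracted quantity $\frac{2}{3}\int_0^1\bar{v}_s\sigma_s\,dX_s$ is a fixed, $\mathcal{F}_1$-measurable random variable, whereas $B$ in the limit of $Z_n$ is independent of $\mathcal{F}_1$. By the defining property of stable convergence, the limit of $Z_n$ lives on an extension of the original space on which its drift component is \emph{literally} the same random variable $\frac{2}{3}\int_0^1\bar{v}_s\sigma_s\,dX_s$, not merely a distributional copy; equivalently, the pair $\big(Z_n,\frac{2}{3}\int_0^1\bar{v}_s\sigma_s\,dX_s\big)$ converges jointly and stably with the (non-random-in-$n$) second coordinate held fixed. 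Consequently $Z_n-\frac{2}{3}\int_0^1\bar{v}_s\sigma_s\,dX_s\Longrightarrow\int_0^1 G_s^{1/2}\,dB_s$ stably, the two drift terms annihilating one another exactly.

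Combining this with $A(p,q)\to 0$ through the continuous-mapping/Slutsky step applied to the displayed ratio yields the asserted stable convergence to $\int_0^1 G_s^{1/2}\,dB_s$. I do not anticipate any further obstacle: no new probabilistic estimate beyond Theorem \ref{thm:multiple_grids} is needed, and the whole argument reduces to the algebraic rearrangement above together with the careful bookkeeping of stable (rather than merely distributional) convergence that permits the bias term to be subtracted off exactly.
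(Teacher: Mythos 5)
Your proposal is correct and is essentially the paper's own argument: the paper proves Corollary~\ref{Cor:1} in one line as ``just a rearrangement of the convergence in Theorem 2,'' which is exactly your algebraic unwinding combined with $A(p,q)\to 0$. Your additional care in noting that the $\mathcal{F}_1$-measurable bias term $\frac{2}{3}\int_0^1\bar{v}_s\sigma_s\,dX_s$ can be subtracted exactly precisely because the convergence is \emph{stable} (so the limit's drift component is the same random variable, not a distributional copy) is the correct and only nontrivial point implicit in the paper's one-line proof.
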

\begin{proof} This is just a rearrangement of the convergence in Theorem 2.\end{proof}

To improve over $\widehat{\langle X,X\rangle}^{(0)}_1$ and build a feasible unbiased  estimator, a consistent estimator for the bias term $2/3\int_0^t\bar{v}_s\sigma_s\ dX_s$ is needed. This is the issue that we deal with next. Define
\begin{align} F^{(2)}_n(t)&:=\frac{\sqrt{\ell}\left(\frac{1}{q}\sum_{k=0}^{q-1}[\overline{Y},\overline{Y}]^{\mathcal{S}_k}_t
-\frac{2N_t}{pq}\widehat{\sigma_{\eps}^2}\right)}{\sqrt{\ell}(1+A(p,q))},\q\mbox{and}\q f^{(2)}(t):=\sigma_t^2, \label{uncorrected_est}\\
 F^{(3)}_n(t)&:=\frac{1}{q}\sum_{k=0}^{q-1}\sqrt{\ell}[\overline{Y},\overline{Y},\overline{Y}]^{\mathcal{S}_k}_{t},
 \q\mbox{and}\q f^{(3)}(t):=\bar{v}_t\sigma_t^3. \notag
\end{align}
For a given partition $(\tau_i)_{i\geq0}$ over $[0,1]$, we define
\begin{equation}\label{dfn:derivative}
f^{(j)}_n(t):=(F^{(j)}_n(\tau_i)-F^{(j)}_n(\tau_{i-1}))/(\tau_i-\tau_{i-1}),~{\rm for}~t\in[\tau_i,\tau_{i+1}), ~{\rm for}~j=2,3.
\end{equation}
We then have that stably in law,
$$
\sqrt{\ell}\left(F^{(2)}_n(t)-\langle X,X\rangle_t\right)\Longrightarrow \frac{2}{3}\int_0^t\bar{v}_s\sigma_sdX_s+\int_0^t\left[\left(4w_s-\frac{4}{9}\bar{v}_s^2\right)\sigma_s^4+\frac{8C_\ell^3}{C_p}r_s(\sigma_\eps^2)^2\right]^{1/2}~dB_s.
$$
Define
$$
\gamma(\alpha,\eta):=\min\{-2\alpha+1-3\eta/2;\alpha/2-\eta/2;7\alpha/2-3/2;3\alpha/2-1/2-\eta;5\alpha/2-1-\eta/2\}.
$$
And assume
\begin{itemize}[noitemsep,nolistsep]
\item[C(7')] $\left|\frac{1}{q}\sum_{k=0}^{q-1}\sqrt{\ell}[X,X,X]_{t}^{\mathcal{S}_k}-\int_0^{t}\bar{v}_s\sigma_s^3ds\right|/\delta_{n}\overset{p}\longrightarrow0$ in $D[0,1]$ for a (nonrandom) sequence $(\delta_{n})_{n\geq1}$ with $\delta_{n}\rightarrow0$ and $1/\delta_{n}=o\left(n^{\gamma(\alpha,\eta)}\right)$.
\end{itemize}
We have the following
\begin{Pro}\label{Prop:bias_corr}
Assume the conditions of Theorem 2, C(7') and $3/7<\alpha<(2-3\eta)/4$ with $\eta\in[0,2/21).$ Suppose $f^{(j)}(t)$ is a.s. continuous and bounded on $[0,1]$ for $j=2,3$. Moreover, define a partition $[\tau_i,\tau_{i+1}]:=[t_{id_1q},t_{(i+1)d_1q}]$ which is a block of $d_1q$ time intervals over the complete grid with $1/d_1=o\left(1/n^{1-2\alpha}\right)$, $\max_i|\tau_i-\tau_{i-1}|=o_p(1)$ and $\delta_n/\min_i|\tau_i-\tau_{i-1}|=O_p(1)$; and let
$$
\Delta\overline{Y}_{\tau_i}:=\frac{1}{p}\sum_{j=0}^{p-1}Y_{t_{id_1q+p-j}}-\frac{1}{p}\sum_{j=0}^{p-1}Y_{t_{(i-1)d_1q+p-j}}.
$$
Then
$$
\sum_{\tau_{i}\leq t}\frac{f^{(3)}_n(\tau_{i-1})}{f^{(2)}_n(\tau_{i-1})}\Delta\overline{Y}_{\tau_i}\overset{p}\longrightarrow
\int_0^t\bar{v}_s\sigma_s\ dX_s ~in~D[0,1].
$$
\end{Pro}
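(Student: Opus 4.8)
The plan is to read the left-hand side as a left-endpoint Riemann--Stieltjes sum approximating the It\^o integral $\int_0^t\bar v_s\sigma_s\,dX_s$, and to show that every source of approximation error is asymptotically negligible. Writing $g(s):=\bar v_s\sigma_s=f^{(3)}(s)/f^{(2)}(s)$ (which is legitimate and a.s.\ continuous because $f^{(2)}=\sigma^2\ge c>0$ by C(1) and $f^{(2)},f^{(3)}$ are continuous) and $\widehat g_{i-1}:=f^{(3)}_n(\tau_{i-1})/f^{(2)}_n(\tau_{i-1})$, I would split
\begin{align*}
\sum_{\tau_i\le t}\widehat g_{i-1}\,\Delta\overline Y_{\tau_i}
&=\underbrace{\sum_{\tau_i\le t} g(\tau_{i-1})\,\Delta X_{\tau_i}}_{(\mathrm I)}
+\underbrace{\sum_{\tau_i\le t}\bigl(\widehat g_{i-1}-g(\tau_{i-1})\bigr)\Delta X_{\tau_i}}_{(\mathrm{II})}\\
&\quad+\underbrace{\sum_{\tau_i\le t}\widehat g_{i-1}\bigl(\Delta\overline Y_{\tau_i}-\Delta X_{\tau_i}\bigr)}_{(\mathrm{III})}.
\end{align*}
Term $(\mathrm I)$ is the oracle Riemann sum; since $g$ is a.s.\ continuous and $\max_i|\tau_i-\tau_{i-1}|=o_p(1)$, the standard Riemann approximation of stochastic integrals yields $(\mathrm I)\overset{p}\longrightarrow\int_0^t\bar v_s\sigma_s\,dX_s$ in $D[0,1]$. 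It then remains to show that $(\mathrm{II})$ and $(\mathrm{III})$ vanish uniformly in $t$.

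For the integrator-replacement term $(\mathrm{III})$ I would decompose $\Delta\overline Y_{\tau_i}-\Delta X_{\tau_i}$ into a noise-averaging part $\Delta\overline{\varepsilon}_{\tau_i}$ and a local-smoothing part $\Delta(\overline X_{\tau_i}-X_{\tau_i})$. Under C(5) the two averaging windows anchored at $\tau_i$ and $\tau_{i-1}$ are disjoint (the block spacing $d_1q$ exceeds the averaging length $p$), so, conditionally on $\mathcal F_1$, $\sum_{\tau_i\le t}\widehat g_{i-1}\Delta\overline{\varepsilon}_{\tau_i}$ is a sum of independent mean-zero terms with variance of order $(\text{number of blocks})\times\sigma_\varepsilon^2/p$, while the smoothing increments are of order $\sqrt{p\Delta_n}$. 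With $\widehat g_{i-1}$ bounded (a by-product of the consistency established below) both contributions are $o_p(1)$ under $p\sim C_pn^{3\alpha-1}$, $1/d_1=o(n^{2\alpha-1})$ and $\alpha<(1-\eta)/2$, and Doob's inequality upgrades this to uniform convergence.

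The heart of the argument is $(\mathrm{II})$, and I would first prove $\sup_i|\widehat g_{i-1}-g(\tau_{i-1})|\overset{p}\longrightarrow0$ by controlling each numerical derivative. Writing $f^{(j)}_n(\tau_{i-1})-f^{(j)}(\tau_{i-1})$ as the sum of an estimation error $\bigl[(F^{(j)}_n-F^{(j)})(\tau_{i-1})-(F^{(j)}_n-F^{(j)})(\tau_{i-2})\bigr]/(\tau_{i-1}-\tau_{i-2})$ and a smoothing error $(\tau_{i-1}-\tau_{i-2})^{-1}\int_{\tau_{i-2}}^{\tau_{i-1}}\!\bigl(f^{(j)}(s)-f^{(j)}(\tau_{i-1})\bigr)ds$, the smoothing error is $o_p(1)$ by continuity. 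For $j=3$, after reducing $F^{(3)}_n$ (built on $\overline Y$) to its $X$-based counterpart via the noise and local-averaging estimates used in the proof of Theorem \ref{thm:multiple_grids}, C(7') bounds the increment of $F^{(3)}_n-F^{(3)}$ by $o_p(\delta_n)$ uniformly, and dividing by $\min_i|\tau_i-\tau_{i-1}|\gtrsim\delta_n$ gives $o_p(1)$. For $j=2$, Corollary \ref{Cor:1} controls the per-block increment of $F^{(2)}_n-\langle X,X\rangle$ at order $\ell^{-1/2}\sqrt{|\tau_i-\tau_{i-1}|}$, so the derivative error is $O_p\bigl((\ell\,|\tau_i-\tau_{i-1}|)^{-1/2}\bigr)=o_p(1)$ because $\ell\,|\tau_i-\tau_{i-1}|\to\infty$ (indeed $\asymp d_1\to\infty$). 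Since $f^{(2)}_n\to\sigma^2\ge c$, the ratio error is then uniformly $o_p(1)$. Finally, because $\widehat g_{i-1}$ is $\mathcal F_{\tau_{i-1}}$-measurable (it is built from the block $[\tau_{i-2},\tau_{i-1}]$ and, by C(5), its noise is independent of the future of $X$), the martingale part of $(\mathrm{II})$ is a martingale transform whose predictable quadratic variation is $\sum_{\tau_i\le t}(\widehat g_{i-1}-g(\tau_{i-1}))^2\int_{\tau_{i-1}}^{\tau_i}\sigma_s^2\,ds\le\sup_i|\widehat g_{i-1}-g(\tau_{i-1})|^2\,\langle X,X\rangle_t=o_p(1)$; Lenglart's inequality then gives uniform convergence, and the drift part is $o_p(1)$ by the boundedness of $\mu$ and $\max_i|\tau_i-\tau_{i-1}|=o_p(1)$.

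The main obstacle is the quantitative control of the numerical-derivative errors after division by the small block length $\tau_{i-1}-\tau_{i-2}$, especially for $j=3$, where only the slow $\sqrt\ell$-scaled tricity rate of C(7') is available. The competing requirements---blocks large enough that the estimation errors of $F^{(2)}_n,F^{(3)}_n$ are washed out after differencing, yet small enough that the smoothing errors and $\Delta\overline Y_{\tau_i}-\Delta X_{\tau_i}$ remain negligible---are exactly what the constraints $1/d_1=o(n^{2\alpha-1})$, $\delta_n/\min_i|\tau_i-\tau_{i-1}|=O_p(1)$, $1/\delta_n=o(n^{\gamma(\alpha,\eta)})$ and $3/7<\alpha<(2-3\eta)/4$ with $\eta\in[0,2/21)$ are engineered to reconcile, and carrying out this rate bookkeeping uniformly in $t\in[0,1]$ is where the bulk of the effort lies.
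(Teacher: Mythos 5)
Your skeleton --- oracle Riemann sum plus a coefficient-error term plus an integrator-replacement term, with uniform consistency of the numerical derivatives $f^{(j)}_n$ driving everything --- is the natural route, and it is consistent with how conditions C(7'), $\delta_n/\min_i|\tau_i-\tau_{i-1}|=O_p(1)$ and $1/d_1=o(n^{2\alpha-1})$ are meant to be deployed. But there is a genuine gap at the heart of your term $(\mathrm{II})$: for $j=2$ you extract a per-block increment bound of order $\ell^{-1/2}|\tau_i-\tau_{i-1}|^{1/2}$, \emph{uniformly in $i$}, from Corollary~\ref{Cor:1}. Corollary~\ref{Cor:1} is a stable convergence \emph{in law} of the process $\sqrt{\ell}\bigl(F^{(2)}_n-\langle X,X\rangle\bigr)$; weak convergence does not deliver increment rates, and a fortiori not rates holding simultaneously over the $\asymp \ell/d_1$ blocks (a union or maximal-inequality argument over that many blocks costs an extra factor you never account for). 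To prove $\sup_i\bigl|f^{(2)}_n(\tau_{i-1})-\sigma^2_{\tau_{i-1}}\bigr|=o_p(1)$ one must return to the error decomposition underlying Theorem~\ref{thm:multiple_grids} (the terms $I$, $II$, $III$ there, the noise martingale $M^{(4)}$, etc.), establish moment bounds for their increments uniform in location, and do the exponent bookkeeping; that bookkeeping is precisely what $\gamma(\alpha,\eta)$ encodes --- e.g.\ $7\alpha/2-3/2>0$ is where $\alpha>3/7$ enters and $-2\alpha+1-3\eta/2>0$ is where $\alpha<(2-3\eta)/4$ enters --- yet your proposal never verifies positivity of any of the five exponents in $\gamma$ or ties them to a specific error source. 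As written, the uniform consistency of $\widehat g_{i-1}$, on which both your Lenglart step and your treatment of $(\mathrm{III})$ lean, is asserted rather than proved.

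There is a second, smaller but real, flaw: your claim that $\widehat g_{i-1}$ is $\mathcal{F}_{\tau_{i-1}}$-measurable is false, because $F^{(2)}_n$ contains the term $\frac{2N_t}{pq}\widehat{\sigma_{\eps}^2}$ with $\widehat{\sigma_{\eps}^2}=[Y,Y]_1/(2N_1)$ a \emph{full-sample} statistic, so $f^{(2)}_n(\tau_{i-1})$ looks ahead to time $1$ even after enlarging the filtration by the noise. Consequently the martingale-transform/Lenglart argument for $(\mathrm{II})$ is not literally available. It is repairable: replace $\widehat{\sigma_{\eps}^2}$ by $\sigma_\eps^2$ via Lemma~\ref{Lem:ConvgcOfsigeps}; the induced error in $f^{(2)}_n$ is of order $n^{-1/2}\cdot\frac{N_{\tau_{i-1}}-N_{\tau_{i-2}}}{pq}\big/|\tau_{i-1}-\tau_{i-2}|\asymp n^{1/2-2\alpha}=o(1)$ since $\alpha>3/7>1/4$ --- but this substitution must be performed \emph{before} invoking Lenglart. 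A related overlap issue affects your term $(\mathrm{III})$: the noise window of $\bar{\eps}_{\tau_{i-1}}$ (indices $(i-1)d_1q+1,\ldots,(i-1)d_1q+p$) intersects the data used to build $f^{(3)}_n(\tau_{i-1})$ over $[\tau_{i-2},\tau_{i-1}]$, so the summands in your noise-averaging part are not conditionally independent as claimed; the resulting covariance terms are controllable at the stated orders but must be bounded explicitly rather than assumed away.
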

Proof of Proposition \ref{Prop:bias_corr} is given
in the supplementary article \cite{LZZlama_supp}.

According to the above proposition, a consistent estimator for the bias $2/3\int_0^1\bar{v}_s\sigma_s\ dX_s$ is given by
$$
\mathcal{B}:=\frac{2}{3}\sum_{\tau_{i}\leq 1}\frac{f^{(3)}_n(\tau_{i-1})}{f^{(2)}_n(\tau_{i-1})}\Delta\overline{Y}_{\tau_i}.
$$
Finally, we define our feasible unbiased estimator as
$$
\widehat{\langle X,X\rangle}_1:=\frac{-\mathcal{B}+\sqrt{\ell}\left(\frac{1}{q}\sum_{k=0}^{q-1}[\overline{Y},\overline{Y}]^{\mathcal{S}_k}_1-\frac{2N_1}{pq}\widehat{\sigma_{\eps}^2}\right)}{\sqrt{\ell}(1+A(p,q))}.
$$
The following theorem gives the CLT for our final estimator.
\begin{Thm}
Under the assumptions of Theorem 2 and Proposition 1, stably in law,
\begin{align}
\sqrt{\ell}\left(\widehat{\langle X,X\rangle}_1-\langle X,X\rangle_1\right)\Longrightarrow \int_0^1\left[\left(4w_s-\frac{4}{9}\bar{v}_s^2\right)\sigma_s^4+\frac{8C_\ell^3}{C_p}r_s(\sigma_\eps^2)^2\right]^{1/2}~dB_s,\notag
\end{align}
where $B$ is a standard Brownian motion independent of $\mathcal{F}_1$.
\end{Thm}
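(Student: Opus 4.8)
The plan is to obtain this final central limit theorem as a short consequence of Corollary \ref{Cor:1} together with Proposition \ref{Prop:bias_corr}, by showing that replacing the infeasible bias $\frac{2}{3}\int_0^1\bar{v}_s\sigma_s\,dX_s$ by its feasible surrogate $\mathcal{B}$ costs nothing asymptotically at the $\sqrt{\ell}$ scale. First I would observe that the feasible estimator $\widehat{\langle X,X\rangle}_1$ and the infeasible estimator $\widehat{\langle X,X\rangle}^{(0)}_1$ share the same numerator term $\sqrt{\ell}\big(\frac{1}{q}\sum_{k=0}^{q-1}[\overline{Y},\overline{Y}]^{\mathcal{S}_k}_1-\frac{2N_1}{pq}\widehat{\sigma_{\eps}^2}\big)$ and the same denominator $\sqrt{\ell}(1+A(p,q))$, differing only through the bias correction in the numerator. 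Hence their difference collapses to an explicit ratio and the whole problem reduces to controlling a single error term.

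Concretely, subtracting the two definitions gives
\[
\sqrt{\ell}\left(\widehat{\langle X,X\rangle}_1-\widehat{\langle X,X\rangle}^{(0)}_1\right)
=\frac{1}{1+A(p,q)}\cdot\frac{2}{3}\left(\int_0^1\bar{v}_s\sigma_s\,dX_s-\sum_{\tau_i\le 1}\frac{f^{(3)}_n(\tau_{i-1})}{f^{(2)}_n(\tau_{i-1})}\,\Delta\overline{Y}_{\tau_i}\right),
\]
where the leading $\sqrt{\ell}$ exactly cancels the $\sqrt{\ell}$ in the common denominator. The prefactor $1/(1+A(p,q))$ tends to $1$, since under the stated rates $A(p,q)\sim -n^{4\alpha-2}C_\ell C_p/3\to 0$ (the constraint $\alpha<(1-\eta)/2$ forces $4\alpha-2<0$). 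The bracketed difference converges to $0$ in probability by Proposition \ref{Prop:bias_corr}, whose conclusion, evaluated at $t=1$, is precisely that $\sum_{\tau_i\le t}\frac{f^{(3)}_n(\tau_{i-1})}{f^{(2)}_n(\tau_{i-1})}\Delta\overline{Y}_{\tau_i}\overset{p}{\longrightarrow}\int_0^t\bar{v}_s\sigma_s\,dX_s$ in $D[0,1]$. Therefore $\sqrt{\ell}\big(\widehat{\langle X,X\rangle}_1-\widehat{\langle X,X\rangle}^{(0)}_1\big)\overset{p}{\longrightarrow}0$.

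To finish, I would write the decomposition
\[
\sqrt{\ell}\left(\widehat{\langle X,X\rangle}_1-\langle X,X\rangle_1\right)
=\sqrt{\ell}\left(\widehat{\langle X,X\rangle}^{(0)}_1-\langle X,X\rangle_1\right)
+\sqrt{\ell}\left(\widehat{\langle X,X\rangle}_1-\widehat{\langle X,X\rangle}^{(0)}_1\right).
\]
Corollary \ref{Cor:1} delivers the stable-in-law convergence of the first summand to the claimed mixed-Gaussian limit, while the preceding step shows the second summand is $o_p(1)$. Since stable convergence in law is preserved under adding a term that tends to $0$ in probability (a Slutsky-type property: the pair converges jointly and stably to $(\text{limit},0)$, after which the continuous map $(x,y)\mapsto x+y$ applies), the sum inherits the same stable limit, which is the assertion of the theorem.

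The main obstacle is not in this final assembly, which is essentially algebra plus Slutsky, but in verifying that the replacement error is genuinely negligible at the $\sqrt{\ell}$ rate — and that is exactly the content of Proposition \ref{Prop:bias_corr}, established through Condition C(7') and the rate window $3/7<\alpha<(2-3\eta)/4$. The only points requiring care at this stage are therefore (i) checking that the hypotheses of Theorem \ref{thm:multiple_grids} and of Proposition \ref{Prop:bias_corr} are jointly consistent, so that Corollary \ref{Cor:1} and the bias-consistency hold simultaneously, and (ii) invoking the stable-convergence version of Slutsky's lemma rather than the ordinary one, because the limit is only mixed-Gaussian, living on an extension of the space on which the leading term's $\mathcal{F}_1$-independent Brownian motion $B$ is defined.
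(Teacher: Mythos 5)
Your proposal is correct and follows essentially the same route the paper intends: the theorem is stated without a separate proof precisely because, as you show, it is immediate from Corollary \ref{Cor:1} (the stable CLT for the infeasible estimator $\widehat{\langle X,X\rangle}^{(0)}_1$, itself a rearrangement of Theorem \ref{thm:multiple_grids}) combined with Proposition \ref{Prop:bias_corr} (consistency of $\mathcal{B}$) and the Slutsky-type preservation of stable convergence under addition of an $o_p(1)$ term. Your explicit computation that $\sqrt{\ell}\bigl(\widehat{\langle X,X\rangle}_1-\widehat{\langle X,X\rangle}^{(0)}_1\bigr)=\bigl(1+A(p,q)\bigr)^{-1}\cdot\tfrac{2}{3}\bigl(\int_0^1\bar{v}_s\sigma_s\,dX_s-\sum_{\tau_i\leq1}\frac{f^{(3)}_n(\tau_{i-1})}{f^{(2)}_n(\tau_{i-1})}\Delta\overline{Y}_{\tau_i}\bigr)$ together with $A(p,q)\to0$ (since $4\alpha-2<0$) supplies exactly the details the paper leaves implicit.
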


\section{Simulation studies}\label{sec:simulation}
\setcounter{equation}{0}
\renewcommand{\theequation}{\thesection.\arabic{equation}}

In this section, we conduct simulation studies. We investigate the performance of our proposed estimator $\widehat{\langle X,X\rangle}_1$ compared with  existing popular estimators in both endogenous and non-endogenous cases. We shall use two data generating mechanisms for $X$: (1) a constant volatility Brownian bridge; and (2) a stochastic volatility Heston bridge.  In each case, we start the latent process $X$ at $X_0=\log(5)$, let the standard deviation of the noise be $\sigma_{\eps}:=(\sigma_\eps^2)^{1/2}=0.0005$ and simulate 1,000 sample paths for observed price process $Y$.

\subsection{Estimators used for comparison}\label{ssec:4_est}
Below we briefly recall four commonly used volatility estimators: the  two scales realized volatility (TSRV) of \citet{ZhangMyklandAitSahalia}, the  multi-scale realized volatility (MSRV) of \citet{Zhang}, the Realized Kernel estimator of \citet{NHLS},  and the Pre-averaging estimator of \citet{JLMPV}.

The (small-sample adjusted) TSRV estimator is given by
$$
\widehat{\langle X,X\rangle}_1^{tsrv}=\left(1-\frac{1}{K_{tsrv}}\right)^{-1}\left(\frac{1}{K_{tsrv}}\sum_{k=1}^{K_{tsrv}}[Y,Y]_1^{(k)}-\frac{1}{K_{tsrv}}[Y,Y]_1\right),
$$
where the data is divided into $K_{tsrv}$ non-overlapping sub-grids and $[Y,Y]_1^{(k)}$ is the RV on the $k$th sub-grid. \citet{ZhangMyklandAitSahalia} provided a guideline on the choice of the grid allocation. If we pretend that the volatility were constant, then  the optimal choice for grid allocation is $K_{tsrv}=c_{tsrv}N_1^{2/3},$ where, in practice, one can set
$
c_{tsrv}=\left(\frac{12([Y,Y]_1/(2N_1))^2}{([Y,Y]_1^{sub})^2}\right)^{1/3},
$
where $[Y,Y]_1^{sub}$ is the RV based on sparse sampling. Here, we implement $[Y,Y]_1^{sub}$ at 5 minutes frequency.

The MSRV estimator, which is a rate-optimal extension to  TSRV, is given as follows
$$
\widehat{\langle X,X\rangle}_1^{msrv}=\sum_{j=1}^{K_{msrv}}\lambda_{j}\frac{1}{j}\sum_{k=1}^{j}[Y,Y]_1^{(k)},
$$
where $\lambda_1=a_1+((N_1+1)/2)^{-1}$, $\lambda_2=a_2-((N_1+1)/2)^{-1}$ and $\lambda_i=a_i$ for $i\geq3$
with $a_i=h(i/K_{msrv})i/K_{msrv}^2-h'(i/K_{msrv})i/(2K_{msrv}^3),$ for $i=1,\ldots,K_{msrv},$ where $K_{msrv}=c_{msrv}N_1^{1/2}$ and $h(x)=12x-6$.
The optimal choice of $c_{msrv}$ when the volatility is constant is
$$
c_{msrv}=\left(\frac{T_3+T_4+\left((T_3+T_4)^2+12T_1T_2\right)^{1/2}}{2T_2}\right)^{1/2},
$$
where $T_1=48([Y,Y]_1/(2N_1))^2$, $T_2=52([Y,Y]_1/(2N_1))^2/35$, $T_3=24([Y,Y]_1/(2N_1))^2/5$ and $T_4=48[Y,Y]^{sub}_1([Y,Y]_1/(2N_1))/5$.

The Realized Kernel estimator is defined as
$$
\widehat{\langle X,X\rangle}_1^{Ker}=[Y,Y]_1+\sum_{h=1}^Hf_{k}((h-1)/H)\Big[\sum_{i=1}^{N_1}\left(\Delta Y_{t_i}\Delta Y_{t_{i-h}}+\Delta Y_{t_i}\Delta Y_{t_{i+h}}\right)\Big],
$$
where $H=c_{ker}N_1^{1/2}$ and $f_{k}$ is a kernel function. We  choose   the Parzen kernel:
$$
f_k(x)=\left\{\begin{array}{lll}
1-6x^2+6x^3~~~{\rm for}~0\leq x\leq1/2;\\
2(1-x)^3~~~{\rm for}~1/2\leq x\leq1.
\end{array}
\right.
$$
Under constant volatility, the optimal choice for $c_{ker}$ in practice is given by
$$
c_{ker}=\left(\frac{[Y,Y]_1/(2N_1)}{[Y,Y]^{sub}_1}\right)^{1/2}\left(\frac{1}{f_k^{0,0}}\left(-f_k^{0,2}+\left((f_k^{0,2})^2+3f_k^{0,0}(f_k'''(0)+f_k^{0,4})\right)^{1/2}\right)\right)^{1/2},
$$
where $f_k^{0,0}=\int_0^1f_k(x)^2dx$, $f_k^{0,2}=\int_0^1f_k(x)f_k''(x)dx$ and $f_k^{0,4}=\int_0^1f_k(x)f_k'''(x)dx$.

The Pre-averaging estimator  is  as follows:
$$
\widehat{\langle X,X\rangle}_1^{Pre}=\frac{1}{\theta\varphi_2\sqrt{N_1}}\sum_{i=0}^{N_1-k_n+1}(\Delta\widehat{Y}_i)^2-\frac{\varphi_1}{2\theta^2\varphi_2N_1}[Y,Y]_1,
$$
where $\varphi_1=1,$ $\varphi_2=1/12$ and
$$
\Delta\widehat{Y}_i=\frac{1}{k_n}\left(\sum_{j=k_n/2}^{k_n-1}Y_{i+j}-\sum_{j=0}^{k_n/2-1}Y_{i+j}\right)
$$
with $k_n=\sqrt{N_1}\theta.$
The optimal choice of $\theta$   when the volatility is constant is
\[
\theta=4.777([Y,Y]_1/(2N_1))^{1/2}/([Y,Y]^{sub}_1)^{1/2}.
\]

\begin{Rem}
The grid allocation schemes in constructing the above estimators are optimal in the sense of achieving efficient asymptotic variance bound when $(\sigma_t)$ is constant. However, in practice there is   no optimal choice since, for instance, $(\sigma_t)$ is random and time dependent. See Remarks 2 and 3 in \citet{JLMPV} for related discussions on this. In our case, due to the more complex model assumptions, i.e. data with time endogeneity and noise, and grid allocation scheme, i.e. bivariate setting $(p,q)$ in contrast to the existing univariate cases, we do not provide a theoretical optimal  choice but rather give below some practical guidelines.\label{rem:gridalloremark}
\end{Rem}

Back to our estimator $\widehat{\langle X,X\rangle}_1$, there are several tuning parameters ($n, \ell, p, q$ and $d_1$) that one has to determine. Regarding $n$ which characterizes the sampling frequency, one can use the average number of transactions per day for the past, say 30, days as an approximation.  About $(\ell, p,q)$,
notice that Theorem \ref{thm:multiple_grids} suggests $\ell\sim C_\ell n^{\alpha}$ (hence $q\sim n/\ell$)
and $p\sim C_pn^{3\alpha-1}$.
On the one hand, one should choose $\ell$ as large as possible in order to have higher convergence rate. On the other hand, large $\ell$ induces small $q$ and hence small $p$ (recall $q>p$) and the main role that $p$ plays is to reduce the microstructure noise. Hence, one should also be aware of the magnitude of the microstructure noise when choosing appropriate $p$, and $p$ can not be too small when prices are heavily contaminated. Under the simulation setting below, the sampling frequency is around $n=46,800$, and the standard deviation of the noise is $\sigma_{\eps}=0.0005$. We choose $p=5$ which is found to be good enough to reduce the microstructure noise effect. In practice, one can use \eqref{eq:epsestimator} to estimate the standard deviation of the noise and come up with a reasonable choice of $p$. The block size $q$ should be larger than $p$ and is chosen as 20 (and $\ell\approx 2,340$). As to $d_1$, this depends on, for example, how volatile the volatility process is, which one can get some rough idea by looking at a suitable estimate of the spot volatilities. If the volatility process is more volatile, one should divide the whole time interval into shorter time periods, i.e., choose a smaller $d_1$. In our simulation, we choose   $d_1=100$, i.e. dividing the complete grid into around 20 blocks.

We next present our three simulation designs and the corresponding results.

\subsection{Design I: Brownian bridge with hitting times}
We first consider the case when the latent price process $X$ follows a Brownian bridge with (constant) volatility $\sigma$ that starts at $X_0$ and ends at $X_0+4\sigma$. $X$ can be expressed as (see pp.358 of \citet{KaratzasShreve})
$$
dX_t=\frac{X_0+4\sigma-X_t}{1-t}\ dt+\sigma\ dW_t,
$$
where $W_t$ is a standard Brownian motion. In this study, we set $\sigma=0.02.$
The sampling times are generated as follows: let $a=5\sigma$, $b=\sigma/10$, $n=46,800$, $\ell'\approx 16800$ (roughly $n^{19/21}$), and $q'=[n/\ell']$.  Then
\begin{itemize}[noitemsep,nolistsep]
\item[(1)] For $j=0,1,2,\ldots,q',$ $t_j=\frac{j}{2n};$
\item[(2)] For $i=1,2,\ldots$,
    \begin{itemize}
    \item[] Sparse sampling: $t_{iq'+1}=\inf\{t>t_{iq'}:X_t-X_{t_{iq'}}= ~{\rm either}~a/\sqrt{\ell'}~{\rm or~}-b/\sqrt{\ell'}\}$;
    \item[] Intensive sampling: $t_{iq'+j}=t_{iq'+1}+\frac{j-1}{2n},$ for $j=2,\ldots,q'$.
    \end{itemize}
\end{itemize}
The mean observation duration when sampling sparsely is about $1/(2\ell')$, roughly 3 times of the observation duration when sampling intensively. If as $n\to\infty$, $\ell'$ grows in the rate of $n^{19/21}$, then actually the limit in C(7) vanishes, however, as one can see from the simulation results below, (finite sample) bias correction as discussed in Subsection \ref{ssec:bias_cor} can substantially reduce the (finite sample) bias.

Figure \ref{Fig:hist_qq_design_I} displays the histogram and normal Q-Q plot for the estimator $\widehat{\langle X,X\rangle}_1$ based on the 1,000 simulated samples. The plots show that the finite sample behavior of our CLT works well. In Table \ref{tab:com_designI} we compare the performances of the four estimators that we discussed in Section \ref{ssec:4_est}, the ``Uncorrected'' estimator $F_n^{(2)}(1)$ defined in \eqref{uncorrected_est}, and our final estimator $\widehat{\langle X,X\rangle}_1$. From the table one can see that our estimator provides the smallest RMSE and has substantially smaller bias than the others (reduced by more than 80\%) while maintains similar efficiency (standard deviation).

\begin{figure}[H]
\begin{center}
\includegraphics[angle=90,width=7cm]{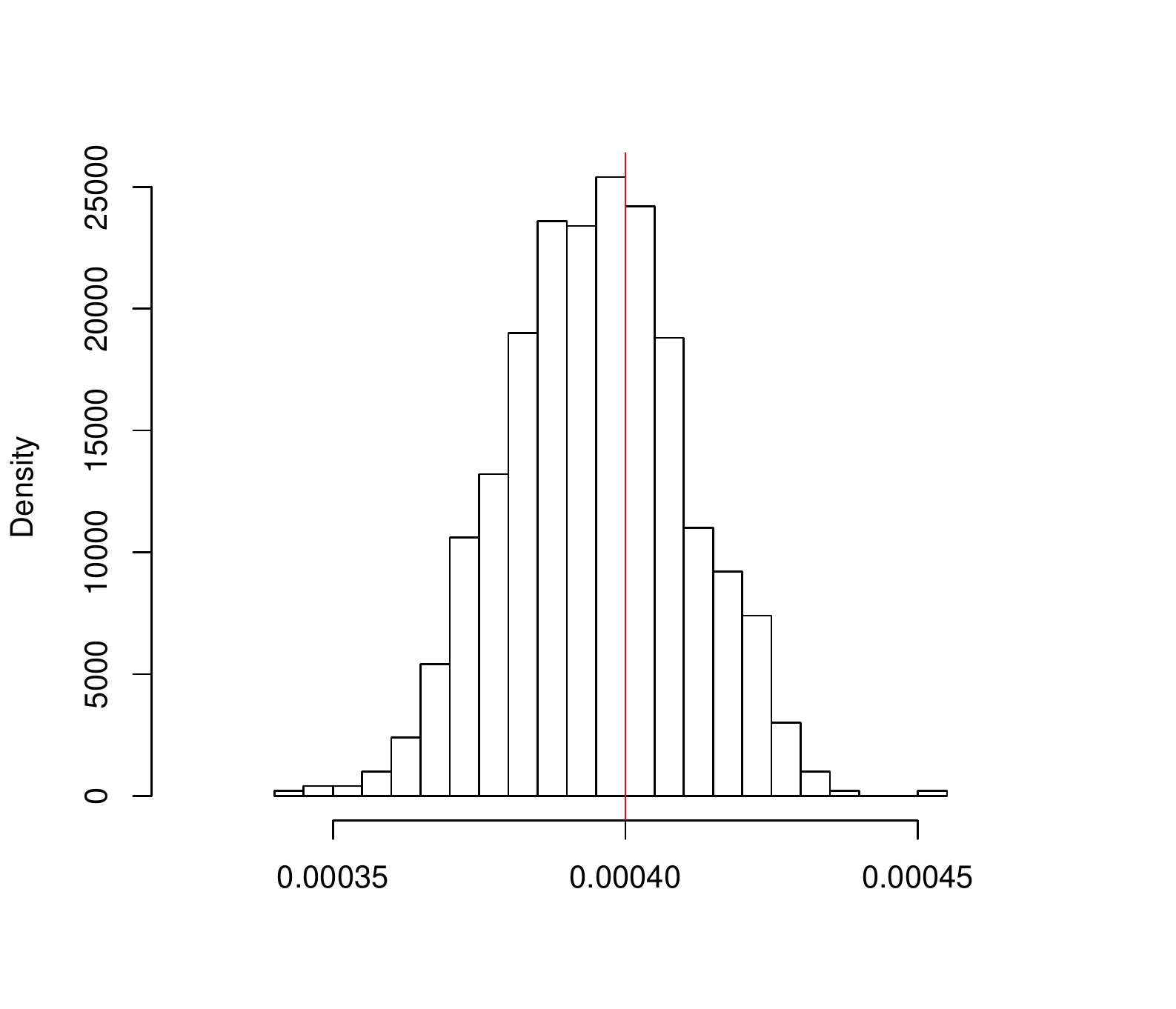}
\includegraphics[angle=90, width=7cm]{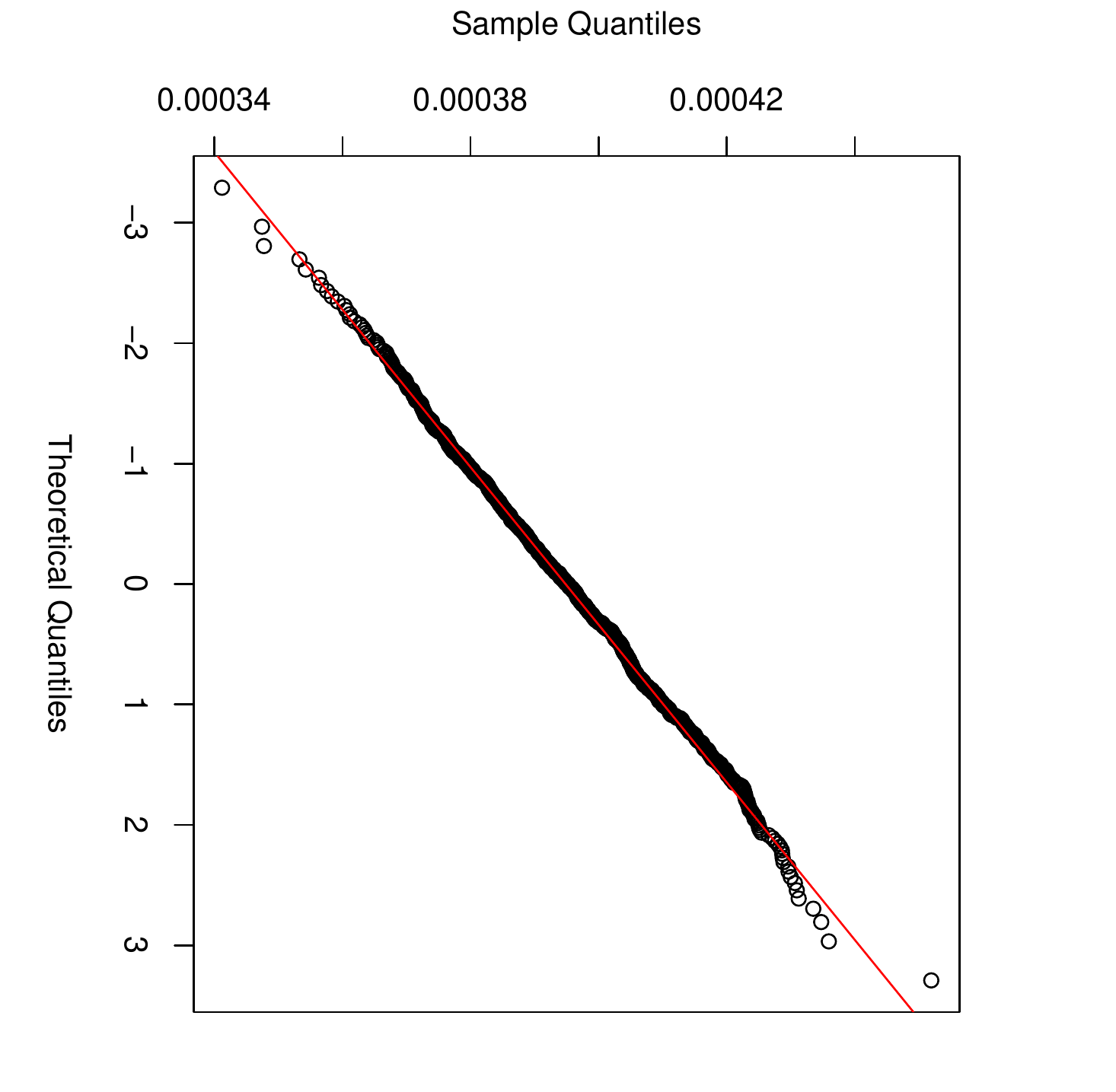}
    \caption{\label{Fig:hist_qq_design_I}Histogram and QQ plot of the estimator $\widehat{\langle X,X\rangle}_1$ for Design I. The red vertical line in the histogram indicates the true value of target.}
\end{center}
\end{figure}

\begin{table}[H]
   {\small \caption{\label{tab:com_designI}Performance of the six estimators in the presence of endogenous time for Design I, the constant volatility case. Our estimator $\widehat{\langle X,X\rangle}_1$ provides the smallest RMSE.  The RMSE is reduced by more than 50\%;  the bias is reduced by more than 80\% while
     the standard deviation is kept at the same level as others. }
\fbox{%
\begin{tabular}{*{7}{c}}
    &    TSRV     &MSRV           & Kernel & Pre-averaging & Uncorrected&$\widehat{\langle X,X\rangle}_1$ \\
\hline
RMSE   &3.734e-05 & 3.553e-05  &3.810e-05  & 3.340e-05  &3.300e-05 &1.621e-05 \\
sample bias & 3.300e-05& 3.163e-05 &  3.454e-05& 2.927e-05& 2.911e-05& -4.997e-06\\
sample s.d. & 1.748e-05 & 1.619e-05  &  1.609e-05& 1.609e-05&1.555e-05 &1.543e-05   \\
\end{tabular}}}
\end{table}

\subsection{Design II: Heston Bridge with hitting times}
In order to further investigate the performance of our estimator under more complex situations, in this subsection, we consider the following stochastic volatility model
\[
\left\{
\aligned
&dX_t=\frac{X_0+4\vartheta^{1/2}-X_t}{1-t}\ dt+\sqrt{V_t}\ dW_t\notag\\
&dV_t=\kappa(\vartheta-V_t)\ dt+\gamma\sqrt{V_t}\ dW^\sigma_t\notag,
\endaligned
\right.
\]
where $W_t$ and $W_t^\sigma$ are standard Brownian motions with instantaneous correlation coefficient $\rho$, and $\kappa,$ $\vartheta$ and $\gamma$ are positive constants. We consider the situation when $X$ starts at $X_0$ and ends at $X_0+4\vartheta^{1/2}$. In the simulation, we set $\vartheta=0.0004$, $\gamma=0.5/252,$ $\kappa=5/252$ and $\rho=-0.5.$ Here, we choose a moderate value $-0.5$ for $\rho$ to represent the leverage effect.
The leverage effect can be bigger for indices as studied by
\cite{YacineKimmel} and \cite{AFL}.  Times are generated according to the same hitting rule as in Design I. We can see from Table \ref{tab:com_designII} that in this more complex situation, our estimator again has substantially smaller bias and RMSE than the others. We did not include the sample standard deviation here since the integrated volatility to be estimated in this case depends on the sample path and is random.
\begin{table}[H]
    {\small\caption{\label{tab:com_designII}Performance of the six estimators in the presence of endogenous time for Design II,
    the stochastic volatility case. Our estimator again provides the smallest RMSE. The RMSE is reduced by more than 50\%;
    the bias is reduced by more than 80\%.
   }
\fbox{%
\begin{tabular}{*{7}{c}}
    &    TSRV    &MSRV           & Kernel  & Pre-averaging & Uncorrected &$\widehat{\langle X,X\rangle}_1$ \\
\hline
RMSE   & 3.824e-05&  3.579e-05 &3.835e-05   &3.387e-05 & 3.375e-05&1.636e-05\\
sample bias &3.393e-05   & 3.175e-05  &3.463e-05   & 2.965e-05 & 2.974e-05 &-4.215e-06 \\
\end{tabular}}}
\end{table}

\subsection{Design III: Brownian Bridge with independent Poisson times}
The goal of this design is to check the performance of our estimator when the sampling times are not endogenous. We again assume the Brownian bridge dynamic for $X$ as in Design I. The observation times are now generated from an independent Poisson process with rate 46,800.   Table~\ref{tab:com_designIII} reports the result of performance comparison, and we can see that our estimator performs similarly as the other estimators in this case.

\begin{table}[H]
    {\small\caption{\label{tab:com_designIII}Performance of the six estimators
     when the observation times are \textbf{\emph{not}} endogenous. The performance of
    our estimator  is comparable to others. }
\fbox{%
\begin{tabular}{*{7}{c}}
    &    TSRV     &MSRV          &  Kernel& Pre-averaging & Uncorrected&$\widehat{\langle X,X\rangle}_1$\\
\hline
RMSE   & 1.486e-05& 1.375e-05 & 1.434e-05 & 1.373e-05 &1.312e-05&1.568e-05\\
sample bias &  2.643e-06&1.584e-06 & 4.144e-06 &-2.847e-07  &-1.274e-06&-7.723e-06 \\
sample s.d. &  1.463e-05&1.367e-05 & 1.374e-05 &1.373e-05 &1.307e-05&1.365e-05 \\
\end{tabular}}}
\end{table}

In summary, one observes from Tables 1-3 that when sampling times are endogenous (Designs I and II),
one can have substantial reductions in RMSE and bias by using our estimator. When there is no endogeneity (Design III), our estimator performs comparably to others.

\section{Concluding remarks}\label{sec:conclusion}
In this paper, we establish a theoretical framework for dealing with effects of both the endogenous time and microstructure noise in volatility inference. An estimator that can accommodate both issues is proposed. Numerical studies are performed. The results show that our proposed estimator can substantially outperform existing popular estimators when time endogeneity exists, while has a comparable performance to others when there is no endogeneity.

\section*{Acknowledgements}
We thank Rainer Dahlhaus, Jean Jacod, Per Mykland and Nakahiro Yoshida for this special issue, and we are very grateful to an anonymous referee for very careful reading of the
paper and constructive suggestions.

\appendix

\section{Proofs}\label{sec:proofs}
Throughout the proofs, $C, c, C_1,$ \hbox{etc.} denote generic constants whose
values may change from line to line.
Moreover, since we shall establish stable convergence, by a change of measure argument (see e.g. Proposition 1 of \cite{MZ2012}) we can suppress the drift and assume that
\begin{itemize}[noitemsep,nolistsep]
\item[1.] $\mu_t\equiv0$.
\end{itemize}
Moreover, because of the local boundedness condition on  $\sigma_t^2$, by standard  localization arguments we can assume without loss of generality that
\begin{itemize}[noitemsep,nolistsep]
\item[2.] $0<c\leq \sigma_t\leq \sigma_+,$ where $c$ and $\sigma_+$ are nonrandom numbers,
\end{itemize}
see e.g. \citet{MZ2009b} and \citet{MZ2012}.
Similarly, we can without loss of generality strengthen the assumption on $\Delta_n$ and $N_1$ in C(2) -- C(4) as follows:
\begin{itemize}[noitemsep,nolistsep]
\item[3.] $\Delta_n\leq C/n^{1-\eta}$; and
\item[4.] $n/C\leq N_1\leq Cn.$
\end{itemize}

\subsection{Prerequisites}\label{subsec:Prerequisitesandlemma}
In the proofs, we shall repeatedly use the following inequalities.

\noindent{\it Burholder-Davis-Gundy (BDG) inequality with random times:}\\
First, if $t_i$'s are stopping times and $f(s)$ is adapted with $\max_{0\leq s\leq1}|f(s)|\leq f_+$, then by the Burholder-Davis-Gundy inequality with random times (see, e.g., p. 161 of \citet{RevuzandYor1999}), for any exponent $\beta\geq1$,
\[
E\left(\int_{t_{i-1}}^{t_i}f(s)dW_s\right)^\beta
\leq C E\left(\int_{t_{i-1}}^{t_i}f(s)^2ds\right)^{\beta/2}\notag
\leq Cf_+^\beta E(t_i-t_{i-1})^{\beta/2}.
\]

\noindent{\it Doob's $L^p$ inequality}:\\
Second, for any process $Z$, which is either a continuous time martingale or  a positive submartingale, Doob's $L^p$ inequality (see p.54 of \citet{RevuzandYor1999}) states that, for any $\beta\geq1$ and any $\lambda>0,$
$$
P\Big[\sup_{s\in[0,1]}|Z_s|\geq \lambda\Big]\frac{1}{\lambda^\beta}E|Z_1|^\beta,
$$
and for $\beta>1$,
$$
\left(E\Big[\sup_{s\in[0,1]}|Z_s|\Big]^\beta\right)^{1/\beta}
\leq \frac{\beta}{\beta-1}\left(E|Z_1|^\beta\right)^{1/\beta}.
$$
Therefore, if we can establish a bound order for $E|Z_1|^\beta$ ($\beta=1~{\rm or}~2$ in our case), then the same bound order applies in $D[0,1]$.

We will also use the following results about the
convergence of $\widehat{\sigma^2_\eps}$ to $\sigma_\eps^2$.
\begin{Lem}
For $\widehat{\sigma_{\eps}^2}$ defined in (\ref{eq:epsestimator}), one has $\sqrt{N_1}\left(\widehat{\sigma_{\eps}^2}-\sigma_\eps^2\right)=O_p(1)$.
\label{Lem:ConvgcOfsigeps}
\end{Lem}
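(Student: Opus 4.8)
The plan is to start from the elementary identity
$$
\widehat{\sigma_\eps^2}-\sigma_\eps^2=\frac{1}{2N_1}\left([Y,Y]_1-2N_1\sigma_\eps^2\right),
$$
so that after multiplying by $\sqrt{N_1}$ the claim reduces to the bound $[Y,Y]_1-2N_1\sigma_\eps^2=O_p(\sqrt{N_1})$. By the strengthened condition~4, $N_1\asymp n$, so it suffices to prove $O_p(\sqrt{n})$. Writing $\Delta Y_{t_i}=\Delta X_{t_i}+\Delta\eps_{t_i}$ and expanding the square splits the left-hand side into the realized volatility $[X,X]_1$, the cross term $2\sum_{i=1}^{N_1}\Delta X_{t_i}\Delta\eps_{t_i}$, and the centered noise term $\sum_{i=1}^{N_1}(\Delta\eps_{t_i})^2-2N_1\sigma_\eps^2$. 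I would bound each piece separately.

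The organizing device for the two noise-dependent pieces is conditioning on $\mathcal{F}_1$: since the $(t_i)$ are $(\mathcal{F}_t)$-stopping times, both $N_1$ and every increment $\Delta X_{t_i}$ are $\mathcal{F}_1$-measurable, whereas by C(5) the noise $(\eps_{t_i})$ is an i.i.d.\ mean-zero sequence independent of $\mathcal{F}_1$. Conditioning therefore freezes the number of summands and the price increments, leaving sums in the i.i.d.\ noise. For the noise term, the telescoping identity $(\Delta\eps_{t_i})^2=\eps_{t_i}^2-2\eps_{t_i}\eps_{t_{i-1}}+\eps_{t_{i-1}}^2$ makes its conditional mean exactly $2N_1\sigma_\eps^2$, so the term is conditionally centered; its summands have only finite-range dependence and bounded second moments (the finite fourth moment of $\eps$ from C(5)), so the conditional variance is $O(N_1)$. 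Crucially this bound is deterministic in the noise, depending only on $N_1\leq Cn$, whence the noise term is $O_p(\sqrt{n})$.

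For the cross term, conditioning on $\mathcal{F}_1$ again gives conditional mean zero, and using $\operatorname{Cov}(\Delta\eps_{t_i},\Delta\eps_{t_j}\mid\mathcal{F}_1)$ equal to $2\sigma_\eps^2$, $-\sigma_\eps^2$, $0$ according as $i=j$, $|i-j|=1$, $|i-j|\geq2$, the conditional variance equals $8\sigma_\eps^2[X,X]_1-8\sigma_\eps^2\sum_i\Delta X_{t_i}\Delta X_{t_{i+1}}$. Under the bounded-volatility normalization (item~2) one has $[X,X]_1=O_p(1)$, and $|\sum_i\Delta X_{t_i}\Delta X_{t_{i+1}}|\leq[X,X]_1$ by Cauchy--Schwarz, so this conditional variance is $O_p(1)$; a truncation plus conditional-Chebyshev argument (exploiting the tightness of the conditional variance) then yields that the cross term is $O_p(1)$, hence $O_p(\sqrt{n})$. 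The remaining piece $[X,X]_1$ is $O_p(1)$ since its expectation is bounded by $\sigma_+^2$ and it converges to $\langle X,X\rangle_1$. Summing the three bounds gives $[Y,Y]_1-2N_1\sigma_\eps^2=O_p(\sqrt{n})=O_p(\sqrt{N_1})$, which is the claim.

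The main obstacle is the interaction between the random stopping-time count $N_1$ and the noise averaging: the sums have a random number of terms and the increments $\Delta X_{t_i}$ are not independent of the sampling design. The resolution is exactly the measurability split, $N_1$ and the $\Delta X_{t_i}$ lying in $\mathcal{F}_1$ while the noise is independent of $\mathcal{F}_1$, so that conditioning produces genuine i.i.d.\ (or finite-range dependent) sums whose variances admit either a deterministic $O(n)$ bound (noise term) or an $O_p(1)$ bound (cross term). The one point requiring care is converting the merely $O_p(1)$ conditional variance of the cross term into an unconditional $O_p$ statement, which the truncation-plus-conditional-Chebyshev estimate handles in place of a single global moment bound.
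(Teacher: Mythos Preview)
Your proposal is correct and follows essentially the same approach as the paper: the same three-term decomposition $[X,X]_1+2[X,\eps]_1+([\eps,\eps]_1-2N_1\sigma_\eps^2)$, the same conditioning on $\mathcal{F}_1$ to exploit the independence of the noise, and the same conditional second-moment bounds for each piece. The only cosmetic difference is that the paper further splits $[X,\eps]_1$ and $[\eps,\eps]_1$ into sums with independent summands (e.g., $\sum\Delta X_{t_i}\eps_{t_i}-\sum\Delta X_{t_i}\eps_{t_{i-1}}$) rather than working with the finite-range dependence of $\Delta\eps_{t_i}$ directly as you do.
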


\begin{proof} First, notice that $$\sqrt{N_1}\left(\widehat{\sigma_{\eps}^2}-\sigma_\eps^2\right)=[X,X]_1/2\sqrt{N_1}+[X,\eps]_1/\sqrt{N_1}+([\eps,\eps]_1-2N_1\sigma_\eps^2)/2\sqrt{N_1}.$$ By C(2) and the fact that $[X,X]_1=O_p(1)$,
$$[X,X]_1/2\sqrt{N_1}=O_p\left(\frac{1}{\sqrt{n}}\right).$$
As to $[X,\eps]_1/\sqrt{N_1},$ we treat it as follows,
\begin{align}
[X,\eps]_1/\sqrt{N_1}=
\frac{1}{\sqrt{N_1}}\sum_{t_{i}\leq 1}\Delta X_{t_i}\eps_{t_{i}}-\frac{1}{\sqrt{N_1}}\sum_{t_{i}\leq 1}\Delta X_{t_i}\eps_{t_{i-1}}\notag.
\end{align}
We have
$$E\left[\left.\left(\frac{1}{\sqrt{N_1}}\sum_{t_{i}\leq 1}\Delta X_{t_i}\eps_{t_{i}}\right)^2\right|\mathcal{F}_1\right]=\frac{\sigma^2_\eps}{N_1}[X,X]_1=O_p\left(\frac{1}{n}\right)$$
by again C(2) and $[X,X]_1=O_p(1)$. The same argument applies to the other term. Hence, $[X,\eps]_1/\sqrt{N_1}=O_p\left(1/\sqrt{n}\right)$. For the last term $([\eps,\eps]_1-2N_1\sigma_\eps^2)/2\sqrt{N_1},$ we rewrite it as
\begin{align}
\frac{([\eps,\eps]_1-2N_1\sigma_\eps^2)}{2\sqrt{N_1}}=\frac{1}{\sqrt{N_1}}\sum_{t_i\leq 1}(\eps_{t_i}^2-\sigma_\eps^2)-\frac{1}{\sqrt{N_1}}\sum_{t_{i}\leq 1}\eps_{t_{i-1}}\eps_{t_{i}}-\frac{\eps_{t_0}^2+\eps_{t_{N_1}}^2-2\sigma_\eps^2}{2\sqrt{N_1}}\notag.
\end{align}
Similarly as above, we have
$$
E\left[\left.\left(\frac{1}{\sqrt{N_1}}\sum_{t_i\leq 1}(\eps_{t_i}^2-\sigma_\eps^2)\right)^2\right|\mathcal{F}_1\right]=\left(1+\frac{1}{N_1}\right)
\var(\eps^2)=O_p(1),
$$
and $1/\sqrt{N_1}\sum_{t_{i}\leq 1}\eps_{t_{i-1}}\eps_{t_{i}}=O_p(1)$ and $(\eps_{t_0}^2+\eps_{t_{N_1}}^2-2\sigma_\eps^2)/(2\sqrt{N_1})=O_p(1/\sqrt{n})$, completing the proof.
\end{proof}

Next, as we will deal with sums of a random number of random variables repeatedly, the following simple lemma turns out to be very useful.
\begin{Lem}\label{lem:rnd_sum}
Suppose that $N$ is a random variable taking values in nonnegative integers, and $X_1,X_2,\ldots$ are nonnegative random variables satisfying
\[
  E(X_iI_{\{i\leq N\}})\leq C\cdot P(i\leq N), \q\mbox{for all }~i.
\]
Then
\[
  E\sum_{i=1}^N X_i \leq C\cdot E(N).
\]
\end{Lem}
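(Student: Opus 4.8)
The plan is to exploit nonnegativity throughout and reduce the random upper limit to a sum of indicators, after which everything becomes an application of Tonelli's theorem together with the tail-sum formula for the mean of a nonnegative integer-valued random variable. The first step is to rewrite the random partial sum as a full series weighted by indicators: since each $X_i\geq0$, one has
\[
\sum_{i=1}^N X_i=\sum_{i=1}^\infty X_i I_{\{i\leq N\}}\quad\text{a.s.}
\]
This identity is the only genuinely substantive move; it converts ``summing up to a random index'' into a deterministic-range series whose terms happen to vanish beyond $N$.

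Next I would take expectations and interchange $E$ with the infinite sum. Because the summands $X_i I_{\{i\leq N\}}$ are nonnegative, this interchange is justified by Tonelli's theorem (equivalently monotone convergence applied to the partial sums), requiring no integrability assumption. This yields
\[
E\sum_{i=1}^N X_i=\sum_{i=1}^\infty E\bigl(X_i I_{\{i\leq N\}}\bigr).
\]
At this point the hypothesis $E(X_i I_{\{i\leq N\}})\leq C\cdot P(i\leq N)$ applies termwise, giving the bound
\[
E\sum_{i=1}^N X_i\leq C\sum_{i=1}^\infty P(i\leq N)=C\sum_{i=1}^\infty P(N\geq i).
\]

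The final step is to recognize the right-hand sum as $C\cdot E(N)$ via the standard identity $\sum_{i=1}^\infty P(N\geq i)=E(N)$, valid for any nonnegative integer-valued $N$ (this is itself just Tonelli applied to $\sum_i I_{\{i\leq N\}}=N$). Combining the two displays completes the proof. I do not anticipate any real obstacle here: the entire argument is a routine measure-theoretic manipulation, and the single point that needs care is simply ensuring that the indicator reformulation and the Tonelli interchange are invoked before any use of the hypothesis, so that no integrability of the $X_i$ is tacitly assumed.
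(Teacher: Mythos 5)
Your proof is correct and follows essentially the same route as the paper's: the identity $\sum_{i=1}^N X_i=\sum_{i=1}^\infty X_i I_{\{i\leq N\}}$ combined with the Monotone Convergence (Tonelli) interchange, followed by the termwise hypothesis and the tail-sum formula $\sum_{i\geq1}P(N\geq i)=E(N)$. You have merely written out in full the steps the paper compresses into one line, so there is nothing to add.
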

\begin{proof}
The conclusion follows from the fact that $\sum_{i=1}^N X_i =\sum_{i=1}^\infty X_i I_{\{i\leq N\}}$ and the Monotone Convergence Theorem.
\end{proof}

\subsection{Proof of Theorem 1: single sub-grid case}\label{ssec:pf_single_grid}
The basic idea is to decompose
$$\widehat{\langle X,X\rangle}_t^{LA}-\langle X,X\rangle_t=[\overline{Y},\overline{Y}]^{\mathcal{S}}_t-\frac{2L_t}{p}\widehat{\sigma_{\eps}^2}-\langle X,X\rangle_t$$ into existing familiar quantities and other negligible terms. The proof is divided into three steps.

\subsection*{Step 1: Introducing $\widetilde{Y}$}
The local average can be decomposed as follows
\begin{align}
\overline{Y}_{t_{i,0}}&=\frac{1}{p}\sum_{j=0}^{p-1}(X_{t_{iq+p-j}}+\eps_{t_{iq+p-j}})\notag\\
&=X_{t_{i,0}}-\frac{1}{p}\sum_{j=0}^{p-1}(X_{t_{iq+p}}-X_{t_{iq+p-j}})+\frac{1}{p}\sum_{j=0}^{p-1}\eps_{t_{iq+p-j}}\notag\\
&=X_{t_{i,0}}
-\sum_{j=2}^{p}\frac{j-1}{p}\Delta X_{t_{iq+j}}
+\bar{\eps}_{t_{i,0}}.\notag
\end{align}
where
$$\bar{\eps}_{t_{i,0}}:=\frac{1}{p}\sum_{j=0}^{p-1}\eps_{t_{iq+p-j}},$$
which is a sequence of independent random variables with common mean $E\bar{\eps}=0$, variance $E\bar{\eps}^2=\sigma_\eps^2/p,$ $E\bar{\eps}^3=E\eps^3/p^2$ and $E\bar{\eps}^4=E\eps^4/p^3+3(p-1)(\sigma_\eps^2)^2/p^3.$ Motivated by the above decomposition, we introduce the new process $\widetilde{Y}$ as follows
\begin{align}
\widetilde{Y}_{t_{i,0}}=X_{t_{i,0}}+\bar{\eps}_{t_{i,0}},~{\rm for}~i=0,\ldots,L_1.\notag
\end{align}
The strategy is that if the difference $([\widetilde{Y},\widetilde{Y}]^{\mathcal{S}}_t-[\overline{Y},\overline{Y}]^{\mathcal{S}}_t)$, where similarly to the definition of $[\overline{Y},\overline{Y}]^{\mathcal{S}}_t$
$$
[\widetilde{Y},\widetilde{Y}]^{\mathcal{S}}_t:=\sum_{t_{i,0}\leq t}(\Delta\widetilde{Y}_{t_{i,0}})^2~{\rm and}~\Delta\widetilde{Y}_{t_{i,0}}=\widetilde{Y}_{t_{i,0}}-\widetilde{Y}_{t_{i-1,0}},
$$
is of a negligible order, then one needs only to deal with $[\widetilde{Y},\widetilde{Y}]_t^{\mathcal{S}}$.

\subsection*{Step 2: Determining the order of $([\widetilde{Y},\widetilde{Y}]^{\mathcal{S}}_t-[\overline{Y},\overline{Y}]^{\mathcal{S}}_t)$}
For notational convenience, we define for $k=0,1,\ldots, q-1$,
\begin{equation}
\left\{
\begin{array}{ll}
 A_i^k:=X_{t^k_{i,0}},\\
 B_i^k:=\bar{\eps}_{t^k_{i,0}}, \q {\rm and}\\ C_i^k:=
-\sum_{j=2}^{p}\frac{j-1}{p}\Delta X_{t_{iq+j+k}},
 \end{array}
    \right.\label{eq:ABCdefinitions}
\end{equation}
and let
\[
A_i=A_i^0, \q B_i=B_i^0,\q\mbox{ and } C_i=C_i^0.
\]
Adopting the above notation, we can write
\begin{align}
[\overline{Y},\overline{Y}]^{\mathcal{S}}_t-[\widetilde{Y},\widetilde{Y}]_t^{\mathcal{S}}&=\sum_{t_{i,0}\leq t}(\Delta A_i+\Delta B_i+\Delta C_i)^2-\sum_{t_{i,0}\leq t}(\Delta A_i+\Delta B_i)^2\notag\\
&=\underbrace{\sum_{t_{i,0}\leq t}\Delta C_i^2}_{I}+\underbrace{2\sum_{t_{i,0}\leq t}\Delta A_i\Delta C_i}_{II}+\underbrace{2\sum_{t_{i,0}\leq t}\Delta B_i\Delta C_i}_{III}.\label{eq:diffY_decomp}
\end{align}
By Cauchy-Schwartz inequality, for any $t$,
\begin{align}
I\leq 4\sum_{t_{i,0}\leq 1}C_i^2\leq 4\sum_{t_{iq}\leq 1} C_i^2\notag.
\end{align}
By the BDG inequality and the strong markov property of $X$,
\begin{align}
E[C_i^2I_{\{t_{iq}\leq 1\}}]&=E\left(I_{\{t_{iq}\leq 1\}}E\left[\left.C_i^2\right|\mathcal{F}_{t_{iq}}\right]\right)
\leq CE\left[I_{\{t_{iq}\leq 1\}}E\left(\left.\sum_{j=1}^{p-1}\frac{j^2}{p^2}\int_{t_{iq+j}}^{t_{iq+j+1}}\sigma_s^2ds\right|\mathcal{F}_{t_{iq}}\right)\right]\notag\\
&\leq CE\left[I_{\{t_{iq}<1\}}E\left(\left.\sum_{j=1}^{p-1}\frac{j^2}{p^2}\sigma_+^2\Delta_n\right|\mathcal{F}_{t_{iq}}\right)\right]\leq C\frac{p}{n^{1-\eta}} P(t_{iq}\leq 1)\notag.
\end{align}
By Lemma \ref{lem:rnd_sum} and the fact that $N_1\leq Cn$ and hence $L_1\leq Cn/q$ we then obtain
\begin{equation}\label{eq:termIiterationofExpectation}
E(I) \leq 4E\left(\sum_{t_{i,0}\leq 1}C_i^2\right)\leq 4E\left(\sum_{t_{iq}\leq 1} C_i^2\right)\leq C p/(qn^{-\eta}).
\end{equation}
Next we study term $III.$ In fact,
\begin{align}
E(III)^2
=4E\left[E\left(\left.\left(\sum_{t_{i,0}\leq t}\Delta B_i\Delta C_i\right)^2\right|\mathcal{F}_1\right)\right]
&\leq C\frac{\sigma_\eps^2}{p}E\left(\sum_{t_{i,0}\leq t} C_i^2\right).\label{eq:termIfortheoremIIuse}
\end{align}
Hence, it follows from (\ref{eq:termIiterationofExpectation}) that $III=O_p\left((1/(qn^{-\eta}))^{1/2}\right)$.

Finally we deal with term $II$.
\begin{Cla}
$II=2\sum_{t_{i,0}\leq t}\Delta A_i\Delta C_i= O_p\left(\frac{\ell p}{n^{1-\eta}}\right)+O_p\left(\sqrt{\frac{p}{n^{1-2\eta}}}\right).$
\end{Cla}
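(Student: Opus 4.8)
The plan is to split the product according to whether the full-grid increments making up $\Delta A_i$ and $\Delta C_i$ overlap, exploiting that $X$ is a continuous martingale (recall the drift has been suppressed). Writing $\Delta C_i=C_i-C_{i-1}$, we have $\Delta A_i\Delta C_i=\Delta A_iC_i-\Delta A_iC_{i-1}$. Since $\Delta A_i=X_{t_{iq+p}}-X_{t_{(i-1)q+p}}=\sum_{m=(i-1)q+p+1}^{iq+p}\Delta X_{t_m}$, the increments composing $C_{i-1}$ (indexed by $(i-1)q+2,\ldots,(i-1)q+p$) lie strictly below the range of $\Delta A_i$, whereas those of $C_i$ (indexed by $iq+2,\ldots,iq+p$) sit inside the top of that range. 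Because $C_{i-1}$ is $\mathcal{F}_{t_{(i-1)q+p}}$-measurable while $\Delta A_i$ is a subsequent martingale increment, $E[\Delta A_iC_{i-1}\mid\mathcal{F}_{t_{(i-1)q+p}}]=0$; and expanding the overlapping product separates a ``diagonal'' piece of squared increments from ``off-diagonal'' products of distinct increments that again have vanishing conditional mean. Concretely,
\[
\Delta A_iC_i=\underbrace{-\sum_{j=2}^p\frac{j-1}{p}(\Delta X_{t_{iq+j}})^2}_{=:D_i}-\sum_{j=2}^p\frac{j-1}{p}\Delta X_{t_{iq+j}}\!\!\sum_{\substack{m=(i-1)q+p+1\\ m\neq iq+j}}^{iq+p}\!\!\Delta X_{t_m}.
\]

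For the diagonal term I would argue in $L^1$. By the BDG inequality and the strengthened bound $\Delta_n\leq C/n^{1-\eta}$ one has $E[(\Delta X_{t_{iq+j}})^2\mid\mathcal{F}_{t_{iq+j-1}}]\leq C/n^{1-\eta}$, so $E[|D_i|\,I_{\{t_{iq}\leq1\}}]\leq C\big(\sum_{j=2}^p\tfrac{j-1}{p}\big)/n^{1-\eta}\leq Cp/n^{1-\eta}$. Applying Lemma \ref{lem:rnd_sum} to the nonnegative variables $|D_i|$ with $N=L_t$ (so $E[L_t]\sim\ell$) then gives $E\big|\sum_{t_{i,0}\leq t}D_i\big|\leq C\,\ell p/n^{1-\eta}$, so this piece is $O_p(\ell p/n^{1-\eta})$, the first asserted order.

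The two remaining pieces, namely the off-diagonal sum above and $\sum_{t_{i,0}\leq t}\Delta A_iC_{i-1}$, are sums of mean-zero block contributions, and I would control them in $L^2$. The crucial point is \emph{block orthogonality}: every index entering block $i$ is at most $iq+p$, while every increment appearing in any block $i'>i$ is indexed strictly above $(i'-1)q+p\geq iq+p$, so conditioning on $\mathcal{F}_{t_{(i'-1)q+p}}$ annihilates the block-$i'$ contribution and all cross terms vanish. Each $L^2$ norm therefore collapses to a sum of within-block second moments. For $\Delta A_iC_{i-1}$, disjointness gives $E[(\Delta A_i)^2C_{i-1}^2]=E[C_{i-1}^2\,E[(\Delta A_i)^2\mid\mathcal{F}_{t_{(i-1)q+p}}]]$, and since $E[(\Delta A_i)^2\mid\cdot]\leq C\,q/n^{1-\eta}$ and $E[C_{i-1}^2]\leq C\,p/n^{1-\eta}$, each term is $O(pq/n^{2-2\eta})$; a parallel fourth-moment count for the off-diagonal term (the surviving diagonal pairings of the four increments each contribute $E[(\Delta X)^2(\Delta X)^2]\sim n^{-2(1-\eta)}$ over $\sim pq$ weighted index pairs) produces the same order. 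Summing over the $\sim\ell$ blocks and using $\ell q\sim n$ yields squared $L^2$ norm of order $\ell\cdot pq/n^{2-2\eta}\sim p/n^{1-2\eta}$, hence each piece is $O_p(\sqrt{p/n^{1-2\eta}})$. Combining the three bounds proves the claim.

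I expect the main obstacle to be the $L^2$ estimate of the off-diagonal term: establishing block orthogonality cleanly despite the overlapping $\Delta A_i$ intervals, and carrying out the fourth-moment count so that the weights $(j-1)/p$ together with the $\sim q$ admissible partner indices $m$ combine to exactly $pq/n^{2-2\eta}$ per block rather than a larger order, which is precisely what makes the final sum collapse to $p/n^{1-2\eta}$ after invoking $\ell q\sim n$.
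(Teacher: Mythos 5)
Your proposal is correct and follows essentially the same route as the paper: the same split $\Delta A_iC_i-\Delta A_iC_{i-1}$, the same diagonal/off-diagonal expansion of the overlapping product (your $D_i$ is exactly the paper's $\varsigma_1$ summand, bounded in $L^1$ via Lemma \ref{lem:rnd_sum} to give the $O_p(\ell p/n^{1-\eta})$ term), and the same martingale-orthogonality $L^2$ estimates combined with $\ell q\lesssim n$ to get $O_p\bigl(\sqrt{p/n^{1-2\eta}}\bigr)$ for both $\sum C_{i-1}\Delta A_i$ and the off-diagonal piece. The only difference is presentational: where you sketch a fourth-moment pairing count (rightly flagging it as the delicate step, since under endogenous times unpaired terms like $E[\Delta X_a^2\,\Delta X_b\,\Delta X_d]$ do not individually vanish), the paper makes this rigorous by chronologically rearranging the off-diagonal sum into the martingale $\varsigma_2=\sum_{i}\sum_{j=2}^{p}\bigl[\frac{j-1}{p}\Delta X^{(i)}+\frac{1}{p}\sum_{m=2}^{j-1}(j+m-2)\Delta X_{t_{iq+m}}\bigr]\Delta X_{t_{iq+j}}$ with predictable brackets of $L^2$-size $O(q/n^{1-\eta})$ and applying the BDG inequality, which delivers precisely your $pq/n^{2-2\eta}$ per block.
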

{\it Proof of the Claim.} First notice that
$$
\sum_{t_{i,0}\leq t}\Delta A_i\Delta C_i=\sum_{t_{i,0}\leq t}C_{i}\Delta A_i-\sum_{t_{i,0}\leq t}C_{i-1}\Delta A_i,
$$
where, by BDG inequality and (\ref{eq:termIiterationofExpectation}), we have that
\begin{equation}\label{eq:termIIfortheoremIIuse}
E\left(\sum_{t_{i,0}\leq t}C_{i-1}\Delta A_i\right)^2
\leq C E\sum_{t_{i,0}\leq t}C_{i-1}^2 \frac{q}{n^{1-\eta}}\sigma_+^2
\leq C \ell\frac{pq}{n^{2-2\eta}}\leq C\frac{p}{n^{1-2\eta}},
\end{equation}
and hence
$
\sum_{t_{i,0}\leq t}C_{i-1}\Delta A_i=O_p\left(\sqrt{\frac{p}{n^{1-2\eta}}}\right).
$
Next define $\Delta X^{(i)}:=X_{t_{iq+1}}-X_{t_{(i-1)q+p}}$. Then
\begin{align}
\sum_{t_{i,0}\leq t}C_{i}\Delta A_{i}
&=-\underbrace{\sum_{t_{i,0}\leq t}\sum_{j=2}^{p}\frac{j-1}{p}\left(\Delta X_{t_{iq+j}}\right)^2}_{\varsigma_1}\label{eq:II_decomp}\\
&-\underbrace{\sum_{t_{i,0}\leq t}\sum_{j=2}^{p}\left[\frac{j-1}{p}\Delta X^{(i)}+\frac{1}{p}\sum_{m=2}^{j-1}(j+m-2)\Delta X_{t_{iq+m}}\right]\Delta X_{t_{iq+j}}}_{\varsigma_2}.\notag
\end{align}
By BDG inequality, $E\varsigma_1\leq C\ell p/n^{1-\eta}$; moreover, by BDG inequality again,
$$
E\left[\frac{j-1}{p}\Delta X^{(i)}+\frac{1}{p}\sum_{m=2}^{j-1}(j+m-2)\Delta X_{t_{iq+m}}\right]^2\leq Cq/n^{1-\eta},
$$
hence, applying once more the BDG inequality one obtains that
\begin{align}
E\left(\varsigma_2\right)^2\leq C\frac{\ell pq}{n^{2-2\eta}}.\label{eq:termIIIfortheoremIIIuse}
\end{align}
It follows that $\varsigma_2=O_p\left((p/n^{1-2\eta})^{1/2}\right)$ and moreover,
$$
II=O_p\left(\frac{\ell p}{n^{1-\eta}}\right)+O_p\left(\sqrt{\frac{p}{n^{1-2\eta}}}\right).
$$
\qed

To summarize,
\begin{align}
[\overline{Y},\overline{Y}]^{\mathcal{S}}_t-[\widetilde{Y},\widetilde{Y}]_t^{\mathcal{S}}=O_p\left(\frac{\ell p}{n^{1-\eta}}\right)+O_p\left(\sqrt{\frac{p}{n^{1-2\eta}}}\right).\label{eq:barYtildeYdiff}
\end{align}

\begin{Rem}
In the proof for Theorem 2 below, we will analyze $[\overline{Y},\overline{Y}]^{\mathcal{S}}_t-[\widetilde{Y},\widetilde{Y}]_t^{\mathcal{S}}$
in more detail. Notice that $I=2\sum_{t_{i,0}\leq t}C_i^2-2\sum_{t_{i,0}\leq t}C_{i-1}C_{i}-C_0^2-C_{L_t}^2$, where the end effect terms
$C_0^2$ and $C_{L_t}^2$ are $O_p\left((p/n^{1-\eta})^{1/2}\right)$ and by BDG inequality, $\sum_{t_{i,0}\leq t}C_{i-1}C_{i}$ is
$O_p\left(p\ell^{1/2}/n^{1-\eta}\right)$. Hence, from (\ref{eq:diffY_decomp}) and the analysis of terms $I$, $II$ and $III$,
\begin{align}
[\overline{Y},\overline{Y}]^{\mathcal{S}}_t-[\widetilde{Y},\widetilde{Y}]_t^{\mathcal{S}}&=2\sum_{t_{i,0}\leq t}\left(\sum_{j=2}^{p}\frac{j-1}{p}\Delta X_{t_{iq+j}}\right)^2-2\sum_{t_{i,0}\leq t}\sum_{j=2}^{p}\frac{j-1}{p}(\Delta X_{t_{iq+j}})^2+O_p\left(\sqrt{\frac{p}{n^{1-2\eta}}}\right).\label{eq:single_to_mov1}
\end{align}
Moreover,
\begin{align}
\left(\sum_{j=2}^{p}\frac{j-1}{p}\Delta X_{t_{iq+j}}\right)^2=\sum_{j=2}^{p}\frac{(j-1)^2}{p^2}(\Delta X_{t_{iq+j}})^2+2\sum_{2\leq k<j\leq p}\frac{(k-1)(j-1)}{p^2}\Delta X_{t_{iq+k}}\Delta X_{t_{iq+j}}.\label{eq:single_to_mov2}
\end{align}
\label{Rem:moredetail}
\end{Rem}
\subsection*{Step 3: CLT for $\widehat{\langle X,X\rangle}_t^{LA}$}
We first notice that
\begin{align}
[\widetilde{Y},\widetilde{Y}]^{\mathcal{S}}_t
&=\sum_{t_{i,0}\leq t}(\Delta X_{t_{i,0}})^2+2\sum_{t_{i,0}\leq t}(\Delta X_{t_{i,0}})(\Delta \bar{\eps}_{t_{i,0}})+\sum_{t_{i,0}\leq t}(\Delta \bar{\eps}_{t_{i,0}})^2\\
&:=[X,X]_t^{\mathcal{S}}+2[X,\bar{\eps}]^{\mathcal{S}}_t+[\bar{\eps},\bar{\eps}]_t^{\mathcal{S}}\notag\\
\notag
\end{align}
Hence, we have the following decomposition
\begin{align}
\widehat{\langle X,X\rangle}_t^{LA}-\langle X,X\rangle_t&=[\overline{Y},\overline{Y}]^{\mathcal{S}}_t-[\widetilde{Y},\widetilde{Y}]^{\mathcal{S}}_t+[\widetilde{Y},\widetilde{Y}]^{\mathcal{S}}_t-\langle X,X\rangle_t-\frac{2L_t}{p}\widehat{\sigma_{\eps}^2}\notag\\
&=\left([\overline{Y},\overline{Y}]^{\mathcal{S}}_t-[\widetilde{Y},\widetilde{Y}]^{\mathcal{S}}_t\right)+\left([X,X]^{\mathcal{S}}_t-\langle X,X\rangle_t\right)\notag\\
&+\left([\bar{\eps},\bar{\eps}]_t^{\mathcal{S}}-\frac{2L_t}{p}\sigma_\eps^2\right)\notag\\
&-\frac{2L_t}{p}\left(\widehat{\sigma_{\eps}^2}-\sigma_\eps^2\right)+2[X,\bar{\eps}]^{\mathcal{S}}_t.\label{eq:finalDecomp}
\end{align}
Recall that $\ell\sim C_\ell n^{\alpha}$ and $p\sim C_pn^{\alpha}$. Then by (\ref{eq:barYtildeYdiff}), $[\overline{Y},\overline{Y}]^{\mathcal{S}}_t-[\widetilde{Y},\widetilde{Y}]^{\mathcal{S}}_t$ is $o_p(1/\sqrt{\ell})$.
As to the term $2L_t/p(\widehat{\sigma_{\eps}^2}-\sigma_\eps^2)$ in (\ref{eq:finalDecomp}),  by Lemma \ref{Lem:ConvgcOfsigeps} together with C(2) and C(4), we have that
\[
\frac{2L_t}{p}\left(\widehat{\sigma_{\eps}^2}-\sigma_\eps^2\right)
=\frac{2L_t}{p\sqrt{N_1}}\sqrt{N_1}\left(\widehat{\sigma_{\eps}^2}-\sigma_\eps^2\right)
=O_p\left(\frac{\ell}{p\sqrt{n}}\right)=o_p\left(\frac{1}{\sqrt{\ell}}\right)~{\rm in~}D[0,1].
\]

Therefore, in order to prove the asymptotic property of $\sqrt{\ell}\left(\widehat{\langle X,X\rangle}_t^{LA}-\langle X,X\rangle_t\right),$ one only needs to prove the FCLT for the following quantity
\begin{align}
\sqrt{\ell}\left([X,X]^{\mathcal{S}}_t-\langle X,X\rangle_t\right)+\sqrt{\ell}\left([\bar{\eps},\bar{\eps}]_t^{\mathcal{S}}-\frac{2L_t}{p}\sigma_\eps^2\right)
+2\sqrt{\ell}[X,\bar{\eps}]^{\mathcal{S}}_t.\label{eq:FinalQant}
\end{align}
Firstly, notice that
$$
[\bar{\eps},\bar{\eps}]_t^{\mathcal{S}}-\frac{2L_t}{p}\sigma_\eps^2
=2\sum_{i=1}^{L_t}\left(\bar{\eps}_{t_{i,0}}^2-\frac{\sigma_\eps^2}{p}\right)
 - \bar{\eps}_{t_{0,0}}^2 - \bar{\eps}_{t_{L_t,0}}^2 -
2\sum_{i=1}^{L_t}\bar{\eps}_{t_{i-1,0}}\bar{\eps}_{t_{i,0}}.
$$
Note that
$\bar{\eps}_{t_{0,0}}^2=O_p(1/p)$, hence $\sqrt{\ell}\bar{\eps}_{t_{0,0}}^2=o_p(1)$, and so is $\sqrt{\ell}\bar{\eps}_{t_{L_t,0}}^2$.
Moreover,
$$
[X,\bar{\eps}]^{\mathcal{S}}_t
=\sum_{i=1}^{L_t}(\Delta X_{t_{i,0}}-\Delta X_{t_{i+1,0}})\bar{\eps}_{t_{i,0}}
+\Delta X_{t_{L_t+1,0}}\bar{\eps}_{t_{L_t,0}}
-\Delta X_{t_{1,0}}\bar{\eps}_{t_{0,0}}.
$$
Note that $\Delta X_{t_{L_t+1,0}}\bar{\eps}_{t_{L_t,0}}=O_p\left((1/(p\ell n^{-\eta}))^{1/2}\right)$ and so is $\Delta X_{t_{1,0}}\bar{\eps}_{t_{0,0}}$.
We are hence led to study the following martingales
\begin{align}
&M_t:=\sqrt{\ell}\left([X,X]^{\mathcal{S}}_t-\langle X,X\rangle_t\right),\notag\\
&M^{(1)}_t:=\sqrt{\ell}\sum_{i=1}^{L_t}\left(\bar{\eps}_{t_{i,0}}^2-\frac{\sigma_\eps^2}{p}\right),\notag\\
&M^{(2)}_t:=\sqrt{\ell}\sum_{i=1}^{L_t}\bar{\eps}_{t_{i-1,0}}\bar{\eps}_{t_{i,0}},\notag\\
&M^{(3)}_t:=\sqrt{\ell}\sum_{i=1}^{L_t}(\Delta X_{t_{i,0}}-\Delta X_{t_{i+1,0}})\bar{\eps}_{t_{i,0}}.\notag
\end{align}
Then (\ref{eq:FinalQant}) can be rewritten as
\begin{align}
M_t+2M^{(1)}_t-2M^{(2)}_t+2M^{(3)}_t+o_p(1).\label{eq:FFquant}
\end{align}

Simple calculation gives the corresponding predictable variation processes as follows
\[
\aligned
\langle \left.M^{(1)},M^{(1)}\rangle_t\right|\mathcal{F}_1
&=\ell L_t{\rm Var}(\bar{\eps}^2)=\ell L_t\left(E\bar{\eps}^4-(E\bar{\eps}^2)^2\right)\\
&=\ell L_t\left(\frac{E\eps^4}{p^3}+\frac{2(\sigma_\eps^2)^2}{p^2}-\frac{3(\sigma_\eps^2)^2}{p^3}\right)\\
&\overset{p}\rightarrow 2\left(\frac{C_\ell}{C_p}\sigma_\eps^2\right)^2\int_0^tr_sds,\\
\left.\langle M^{(2)},M^{(2)}\rangle_t\right|\mathcal{F}_1
&=\ell\frac{\sigma_\eps^2}{p}\sum_{i=1}^{L_t}\bar{\eps}_{t_{i-1,0}}^2
\overset{p}\rightarrow \left(\frac{C_\ell}{C_p}\sigma_\eps^2\right)^2\int_0^tr_sds,\q\mbox{and}\\
\langle \left.M^{(3)},M^{(3)}\rangle_t\right|\mathcal{F}_1
&=2\ell\frac{\sigma_\eps^2}{p}[X,X]^{\mathcal{S}}_t-2\ell\frac{\sigma_\eps^2}{p}\sum_{i=1}^{L_t}(\Delta X_{t_{i,0}})(\Delta X_{t_{i+1,0}})\notag\\
&~~~-\ell\frac{\sigma_\eps^2}{p}\left((\Delta X_{t_{1,0}})^2+(\Delta X_{t_{L_t+1,0}})^2\right)\notag\\
&\overset{p}\rightarrow 2\frac{C_\ell}{C_p}\langle X,X\rangle_t\sigma_\eps^2,\notag
\endaligned
\]
where in the last convergence we used the fact that $\ell/p\cdot \sum_{i=1}^{L_t}(\Delta X_{t_{i,0}})(\Delta X_{t_{i+1,0}})\to 0$ in $D[0,1]$ since it is a martingale with predictable variation
\[
\frac{\ell^2}{p^2}\sum_{i=1}^{L_t}(\Delta X_{t_{i,0}})^2\int_{t_{i,0}}^{t_{i+1,0}}\sigma_s^2 \ ds
\leq C \frac{\ell^2 q}{p^2 n^{1-\eta}} \sum_{i=1}^{L_t}(\Delta X_{t_{i,0}})^2=O_p\left(\frac{\ell}{p^2n^{-2\eta}}\right)=o_p(1).
\]
Furthermore, the predictable covariation processes of $M^{(1)}, M^{(2)}$ and $M^{(3)}$ are
\[
\aligned
\left.\langle M^{(1)},M^{(2)}\rangle_t\right|\mathcal{F}_1
&=\ell E\bar{\eps}^3\sum_{i=1}^{L_t}\bar{\eps}_{t_{i-1,0}}
=\ell\frac{E\eps^3}{p^2}\sum_{i=1}^{L_t}\bar{\eps}_{t_{i-1,0}}
=O_p\left(\frac{\sqrt{\ell}}{p^{3/2}}\right)=o_p(1),\\
\left.\langle M^{(1)},M^{(3)}\rangle_t\right|\mathcal{F}_1
&=\ell\sum_{i=1}^{L_t}(\Delta X_{t_{i,0}}-\Delta X_{t_{i+1,0}})E(\bar{\eps}^3)\\
&=\ell E(\bar{\eps}^3)(\Delta X_{t_{1,0}}- \Delta X_{t_{L_t+1,0}})
=O_p(\sqrt{\ell/(p^4n^{-\eta})})=o_p(1),\q\mbox{and}\\
\left.\langle M^{(2)},M^{(3)}\rangle_t\right|\mathcal{F}_1
&=\ell E\bar{\eps}^2\sum_{i=1}^{L_t}(\Delta X_{t_{i,0}}-\Delta X_{t_{i+1,0}})\bar{\eps}_{t_{i-1,0}}=O_p\left(\frac{\ell}{p^{3/2}}\right)=o_p(1),\notag
\endaligned
\]
where the last order follows from the fact that $\sum_{i=1}^{L_t}(\Delta X_{t_{i,0}}-\Delta X_{t_{i+1,0}})\bar{\eps}_{t_{i-1,0}}=O_p(1/p^{1/2})$ by considering its predictable variation process similarly to the way that we treat $M^{(3)}_t$. The Lindeberg type condition can be easily verified by using the same calculations as above and the assumption that $(\eps_{t_i})_{i\geq1}$ is an independent sequence with finite forth moment. Therefore, the usual martingale central limit theorem gives
\begin{equation}\label{conv:M_i}
\left.\left(
\begin{array}{cll}
M^{(1)}_t\\
M^{(2)}_t\\
M^{(3)}_t
\end{array}
\right)\right| \mathcal{F}_1\Longrightarrow \Sigma^{1/2} \left(
\begin{array}{cll}
W_1(t)\\
W_2(t)\\
W_3(t)
\end{array}
\right),
\end{equation}
where $W_1,$ $W_2$ and $W_3$ are independent standard Brownian motions and the limiting covariance matrix process is given by
\begin{align}
\Sigma=
\left(
\begin{array}{cll}
2\left(\frac{C_\ell}{C_p}\sigma_\eps^2\right)^2\int_0^tr_sds&0&0\\
0&\left(\frac{C_\ell}{C_p}\sigma_\eps^2\right)^2\int_0^tr_sds&0\\
0&0&2\frac{C_\ell}{C_p}\langle X,X\rangle_t\sigma_\eps^2
\end{array}
\right).\notag
\end{align}

Finally, by Theorem 1 in \citet{lmrzz09}, we have the following convergence for $M_t$
\begin{equation}\label{conv:M}
M_t \Longrightarrow \frac{2}{3}\int_0^tv_s\sigma_s dX_s + \int_0^t\left(\frac{2}{3}u_s-\frac{4v_s^2}{9}\right)\sigma_s^4\ dW(s),
\end{equation}
where $W(s)$ is a standard Brownian motion. Furthermore, it is easy to see that $\langle M,M^{(i)}\rangle_t=0$ for $i=1,2,3$, hence $W(t)$ is independent of $W_i(t), i=1,2,3$. Combining this fact with \eqref{conv:M_i} and \eqref{conv:M} yields the desired convergence.
\qed


\subsection{Proof of Theorem 2: multiple sub-grids case}\label{ssec:pf_multiple_grid}
We shall establish the following stable in law convergence
\begin{align}
&~~~\sqrt{\ell}\left(\frac{1}{q}\sum_{k=0}^{q-1}[\overline{Y},\overline{Y}]^{\mathcal{S}_k}_t-\frac{2N_t}{pq}\widehat{\sigma_{\eps}^2}-(1+A(p,q))\langle X,X\rangle_t\right)\notag\\
&\Longrightarrow \frac{2}{3}\int_0^t\bar{v}_s\sigma_sdX_s+\int_0^t\left[\left(4w_s-\frac{4}{9}\bar{v}_s^2\right)\sigma_s^4+\frac{8C_\ell^3}{C_p}r_s(\sigma_\eps^2)^2\right]^{1/2}dB_s\notag.
\end{align}
Similar to the convention of using notation $[\overline{Y},\overline{Y}]^{\mathcal{S}_k}_t$ to denote RV of local averaged $Y$ process computed based on the $k$th sub-grid $\mathcal{S}_k,$ all subsequent notations in the proof with superscript $k$ or $\mathcal{S}_k$ indicate that the same operation as performed on the sub-grid $\mathcal{S}\equiv\mathcal{S}_0$ is applied to the $k$th sub-grid.

The proof for Theorem 2 also proceeds in three steps. Similar to the proof of Theorem 1, the proof for Theorem 2 is based on the following decomposition
\begin{align}
&~~~\sqrt{\ell}\left(\frac{1}{q}\sum_{k=0}^{q-1}[\overline{Y},\overline{Y}]^{\mathcal{S}_k}_t-\frac{2N_t}{pq}\widehat{\sigma_{\eps}^2}-(1+A(p,q))\langle X,X\rangle_t\right)\notag\\
&=\underbrace{\sqrt{\ell}\left(\frac{1}{q}\sum_{k=0}^{q-1}[\overline{Y},\overline{Y}]^{\mathcal{S}_k}_t-\frac{1}{q}\sum_{k=0}^{q-1}[\widetilde{Y},\widetilde{Y}]^{\mathcal{S}_k}_t-A(p,q)\langle X,X\rangle_t\right)}_{ I}\notag\\
&~~~+\underbrace{\sqrt{\ell}\left(\frac{2}{q}\sum_{k=0}^{q-1}[X,\bar\eps]_t^{\mathcal{S}_k}
+\frac{1}{q}\sum_{k=0}^{q-1}[\bar\eps,\bar\eps]_t^{\mathcal{S}_k}-\frac{2N_t}{pq}\widehat{\sigma_{\eps}^2}\right)}_{ II}+\underbrace{\sqrt{\ell}\left(\frac{1}{q}\sum_{k=0}^{q-1}[X,X]^{\mathcal{S}_k}_t-\langle X,X\rangle_t\right)}_{
III}\notag.
\end{align}
Assuming $\ell\sim C_\ell n^{\alpha}$ and $p\sim C_pn^{3\alpha-1}$ with assumptions made in the theorem on $\alpha$ and $\eta$, we shall show in Step 1 that $ I=o_p(1)$; in Step 2 that $ II$ satisfies a martingale CLT; in Step 3 a CLT with asymptotic bias decomposition for term $III$; and, finally, sum up in Step 4 .

\subsection*{Step 1}
To show $I=o_p(1)$, we consider the difference
\begin{align}
&~~~~~~\frac{\sqrt{\ell}}{q}\sum_{k=0}^{q-1}[\overline{Y},\overline{Y}]^{\mathcal{S}_k}_t-\frac{\sqrt{\ell}}{q}\sum_{k=0}^{q-1}[\widetilde{Y},\widetilde{Y}]_t^{\mathcal{S}_k}\notag\\
&=\frac{\sqrt{\ell}}{q}\sum_{k=0}^{q-1}\left(\sum_{t^k_{i,0}\leq t}(\Delta C^k_i)^2+2\sum_{t^k_{i,0}\leq t}\Delta A^k_i\Delta C^k_i+2\sum_{t^k_{i,0}\leq t}\Delta B^k_i\Delta C^k_i\right)\label{eq:mov_diff},
\end{align}
adopting the previous notational convention for the single sub-grid case, where $A_i^k,$ $B_i^k$ and $C_i^k$ are defined in (\ref{eq:ABCdefinitions}). Roughly speaking, recall (\ref{eq:single_to_mov1}) and (\ref{eq:single_to_mov2}) of Remark \ref{Rem:moredetail} from the end of Step 2 in the proof for Theorem 1, we expect  the difference (\ref{eq:mov_diff}) to be
\begin{align}
&~~~\frac{2\sqrt{\ell}}{q}\sum_{j=1}^{p-1}\left(\frac{j^2}{p^2}-\frac{j}{p}\right)\cdot\sum_{t_{i}\leq t}(\Delta X_{t_i})^2+o_p(1)\notag\\
&=\sqrt{\ell}A(p,q)[X,X]_t+o_p(1).\label{eq:MAdiff}
\end{align}
It is easy to see that $\sqrt{\ell}A(p,q)([X,X]_t-\langle X,X\rangle_t)=o_p(1)$. Hence $I=o_p(1)$ if we can show that \eqref{eq:MAdiff} holds.

We now verify \eqref{eq:MAdiff}. It is easy to see that the RHS of \eqref{eq:mov_diff} equals
\[
\aligned
&\underbrace{\frac{2\sqrt{\ell}}{q}\sum_{k=0}^{q-1}\sum_{t^k_{i,0}\leq t}(C_i^k)^2}_{I.i} + ~\underbrace{\frac{2\sqrt{\ell}}{q}\sum_{k=0}^{q-1}\sum_{t^k_{i,0}\leq t}C^k_{i-1}C^k_{i}}_{I.ii}
+~\underbrace{\frac{2\sqrt{\ell}}{q}\sum_{k=0}^{q-1}\sum_{t^k_{i,0}\leq t}\Delta B^k_i\Delta C_i^k}_{I.iii};\\
&-\underbrace{\frac{2\sqrt{\ell}}{q}\sum_{k=0}^{q-1}\sum_{t^k_{i,0}\leq t}C_{i-1}^k\Delta A^k_i}_{I.iv}
+~\underbrace{\frac{2\sqrt{\ell}}{q}\sum_{k=0}^{q-1}\sum_{t^k_{i,0}\leq t}C^k_{i}\Delta A^k_i}_{I.v}
+ o_p(1).
\endaligned
\]
We analyze them one by one.

We start with $I.i=\frac{2\sqrt{\ell}}{q}\sum_{k=0}^{q-1}\sum_{t^k_{i,0}\leq t}(C_i^k)^2.$
Notice that on each sub-grid $\mathcal{S}_k,$
$$(C_i^k)^2=\sum_{j=1}^{p-1}\frac{j^2}{p^2}(\Delta X_{t_{iq+j+k+1}})^2
+2\sum_{j=2}^{p-1}\left(\sum_{m=1}^{j-1}\frac{m}{p}\Delta X_{t_{iq+m+k+1}}\right)\frac{j}{p}\Delta X_{t_{iq+j+k+1}}.$$
Therefore, term $I.i$ can be rewritten as follows
\begin{align}
&\underbrace{\frac{2\sqrt{\ell}}{q}\left(\sum_{j=1}^{p-1}\frac{j^2}{p^2}\right)\sum_{t_{i}\leq t}(\Delta X_{t_i})^2}_{{\rm dominating ~term~} A}
-\frac{2\sqrt{\ell}}{q}\left(\sum_{i=1}^{p-1}\left(\sum_{j=i}^{p-1}\frac{j^2}{p^2}\right)(\Delta X_{t_i})^2+
\sum_{i=L_tq+2}^{L_tq+p-1}\left(\sum_{j=1}^{i-L_tq-1}\frac{j^2}{p^2}\right)(\Delta X_{t_{i+1}})^2\right)
\notag\\
&+\left(\frac{4\sqrt{\ell}}{q}\sum_{i=2}^{p-2}\left(\sum_{j=2}^i\frac{j}{p}\sum_{m=1}^{j-1}\frac{m}{p}\Delta X_{t_{i-j+m+1}}\right)\Delta X_{t_{i+1}}+\frac{4\sqrt{\ell}}{q}\sum_{i=p-1}^{L_tq+2}\left(\sum_{j=2}^{p-1}\frac{j}{p}\sum_{m=1}^{j-1}\frac{m}{p}\Delta X_{t_{i-j+m+1}}\right)\Delta X_{t_{i+1}}\right.\notag\\
&~~~\q\q\left.+\frac{4\sqrt{\ell}}{q}\sum_{i=L_tq+3}^{L_tq+p-1}\left(\sum_{j=i-L_tq}^{p-1}\frac{j}{p}\sum_{m=1}^{j-1}\frac{m}{p}\Delta X_{t_{i-j+m+1}}\right)\Delta X_{t_{i+1}}\right)\\
&:={\rm dominating ~term~} A - \mbox{edge term } B + S^{(1)}_t\notag.
\end{align}
It is easy to see that the edge term $B=o_p(1)$.
We shall further show that $S^{(1)}_t$ is negligible. To see that, notice that its
 expected predictable variation satisfies
\begin{align}
E\langle S^{(1)},S^{(1)}\rangle_1
&\leq \frac{C\ell\sigma_+^2}{q^2 n^{1-\eta}}E\sum_i\left(\sum_{j=2}^{p-1}\frac{j}{p}\sum_{m=1}^{j-1}\frac{m}{p}\Delta X_{t_{i-j+m+1}}\right)^2\notag\\
&=\frac{C\ell\sigma_+^2}{q^2 n^{1-\eta}}E\sum_i\left(\sum_{m=1}^{p-2} \left(\sum_{j=m+1}^{p-1}\frac{j}{p}\frac{j-m}{p}\right)\Delta X_{t_{i-m+1}}\right)^2\notag\\
&\leq \frac{C\ell p^3}{n^{1-2\eta}q^2} ,\notag
\end{align}
which follows from the fact that
$$
E\left(\sum_{m=1}^{p-2}\left(\sum_{j=m+1}^{p-1}\frac{j}{p}\frac{j-m}{p}\right)\Delta X_{t_{i-m+1}}\right)^2\leq C\frac{p^3}{n^{1-\eta}},~~~{\rm uniformly~in}~i.
$$
Therefore,
$$S^{(1)}_t=O_p\left(\sqrt{\frac{\ell p^3}{n^{1-2\eta}q^2}}\right) =O_p\left(\sqrt{\frac{p^3}{q^3n^{-2\eta}}}\right)=o_p(1)~{\rm in}~D[0,1].$$

Next we estimate $I.ii=\frac{2\sqrt{\ell}}{q}\sum_{k=0}^{q-1}\sum_{t^k_{i,0}\leq t}C^k_{i-1}C^k_{i}.$ It can be rearranged as
\begin{align}
I.ii&=\frac{2\sqrt{\ell}}{q}\sum_{i=q+1}^{q+p-2}\left(\sum_{j=1}^{i-q}\frac{j}{p}\sum_{m=1}^{p-1}\frac{m}{p}\Delta X_{t_{i-q-j+m+1}}\right)\Delta X_{t_{i+1}}\notag\\
&~~~+\frac{2\sqrt{\ell}}{q}\sum_{i=q+p-1}^{L_tq+1}\left(\sum_{j=1}^{p-1}\frac{j}{p}\sum_{m=1}^{p-1}\frac{m}{p}\Delta X_{t_{i-q-j+m+1}}\right)\Delta X_{t_{i+1}}\notag\\
&~~~+\frac{2\sqrt{\ell}}{q}\sum_{i=L_tq+2}^{L_tq+p-1}\left(\sum_{j=i-L_tq}^{p-1}\frac{j}{p}\sum_{m=1}^{p-1}\frac{m}{p}\Delta X_{t_{i-q-j+m+1}}\right)\Delta X_{t_{i+1}}.\notag
\end{align}
We denote the above quantity as $S^{(2)}_t.$ Similar to the treatment for $I.i$,
\begin{align}
E\langle S^{(2)},S^{(2)}\rangle_1
&\leq \frac{C\ell\sigma_+^2}{q^2 n^{1-\eta}}E\sum_i\left(\sum_{j=1}^{p-1}\frac{j}{p}\sum_{m=1}^{p-1}\frac{m}{p}\Delta X_{t_{i-q-j+m+1}}\right)^2
\leq\frac{C}{q^2}\frac{\ell p^3}{n^{1-2\eta}}.\notag
\end{align}
Therefore,
$$S^{(2)}_t=
O_p\left(\sqrt{\frac{\ell p^3}{n^{1-2\eta}q^2}}\right) =O_p\left(\sqrt{\frac{p^3}{q^3n^{-2\eta}}}\right)=o_p(1)~{\rm in}~D[0,1].$$

Now we study $I.iii=\frac{2\sqrt{\ell}}{q}\sum_{k=0}^{q-1}\sum_{t^k_{i,0}\leq t}\Delta B^k_i\Delta C_i^k.$
Noticing that the estimate in (\ref{eq:termIfortheoremIIuse}) holds uniformly for sub-grids $\mathcal{S}_k$, hence by Cauchy-Schwartz inequality we obtain that
$$I.iii=O_p\left(\left(\ell/(qn^{-\eta})\right)^{1/2}\right)=o_p(1)~{\rm in~} D[0,1].$$

Now we come to $I.iv=\frac{2\sqrt{\ell}}{q}\sum_{k=0}^{q-1}\sum_{t^k_{i,0}\leq t}C_{i-1}^k\Delta A^k_i.$
By Cauchy-Schwartz inequality again, as the estimate in (\ref{eq:termIIfortheoremIIuse}) holds uniformly for sub-grids $\mathcal{S}_k$, we have
$$I.iv=O_p\left(\left(\ell p/n^{1-2\eta}\right)^{1/2}\right)=o_p(1)~{\rm in~}D[0,1].$$

Finally we deal with $I.v=\frac{2\sqrt{\ell}}{q}\sum_{k=0}^{q-1}\sum_{t^k_{i,0}\leq t}C^k_{i}\Delta A^k_i.$
Similar to the decomposition \eqref{eq:II_decomp} we have
$$
I.v= -\frac{2\sqrt{\ell}}{q}\sum_{k=0}^{q-1}(\varsigma_1^k+\varsigma_2^k)
$$
where, with $\Delta X^{(k,i)}:=X_{t_{iq+k+1}}-X_{t_{{i-1}q+k+p}}$,
\[
\aligned
\varsigma_1^k&=
\sum_{t_{i,0}^k\leq t}\sum_{j=2}^{p}\frac{j-1}{p}\left(\Delta X_{t_{iq+k+j}}\right)^2,\\
\varsigma_2^k&=\sum_{t_{i,0}^k\leq t}\sum_{j=2}^{p}\left[\frac{j-1}{p}\Delta X^{(k,i)}+\frac{1}{p}\sum_{m=2}^{j-1}(j+m-2)\Delta X_{t_{iq+k+m}}\right]\Delta X_{t_{iq+k+j}}.
\endaligned
\]
It is easy to see that
$$
-2\sqrt{\ell}/q\sum_{k=0}^{q-1}\varsigma_1^k
=\underbrace{-2\sqrt{\ell}/q\sum_{j=1}^{p-1}\frac{j}{p}\sum_{t_i\leq t}(\Delta X_{t_i})^2}_{{\rm dominating~term}~B} + o_p(1).
$$
We next prove that $2\sqrt{\ell}/q\sum_{k=0}^{q-1}\varsigma_2^k$ is negligible.
In fact, the estimate in (\ref{eq:termIIIfortheoremIIIuse}) holds uniformly for all the sub-grids, hence by Cauchy-Schwartz inequality again we get that
$$2\sqrt{\ell}/q\sum_{k=0}^{q-1}\varsigma_2^k=O_p\left(\left(\ell p/n^{1-2\eta}\right)^{1/2}\right)=o_p(1)~{\rm in~}D[0,1].$$

Summing up the computations for $I.i$ to $I.v$, we see that the two dominating terms appearing in $I.i$ and $I.v$ together give the first term in (\ref{eq:MAdiff}) and the rest gives the $o_p(1)$ term in (\ref{eq:MAdiff}).

\subsection*{Step 2}
Now we deal with the term $II$, starting with $2\sqrt{\ell}/q\sum_{k=0}^{q-1}[X,\bar\eps]_t^{\mathcal{S}_k}$.
Denote
$$\Delta_q X_{t_i}=
X_{t_{i}}-X_{t_{i-q}}.
$$
Combining terms with common factor $\eps_{t_i}$ and ordering them chronologically (according to the sequence $(\eps_{t_i})_{i\geq 1}$) we get
\begin{align}
\frac{2\sqrt{\ell}}{q}\sum_{k=0}^{q-1}[X,\bar\eps]_t^{\mathcal{S}_k}&=\frac{2\sqrt{\ell}}{q}\sum_{k=0}^{q-1}\left[\sum_{i=1}^{L_t^k-1}\left(\Delta X_{t^k_{i,0}}-\Delta X_{t^k_{i+1,0}}\right)\bar{\eps}_{t^k_{i,0}}+\Delta X_{t^k_{L_t^k,0}}\bar{\eps}_{t^k_{L_t^k,0}}-\Delta X_{t^k_{1,0}}\bar{\eps}_{t^k_{0,0}}\right]\notag\\
&=\frac{2\sqrt{\ell}}{qp}\sum_{i=q+p}^{(L_t-1)q+1}
\left(\sum_{j=0}^{p-1}
\left[\Delta_q X_{t_{i+j}}-\Delta_qX_{t_{i+q+j}}\right]\right)\eps_{t_i}
+
{\rm remainder},\notag
\end{align}
where the remainder term is a sum similar as above over the $i$'s smaller than $q+p$, and can be easily shown to be $o_p(1)$.
We shall further show that the first summand is also negligible, as follows
\begin{align}
&~~~{\rm Var}\left(\left.\frac{2\sqrt{\ell}}{qp}\sum_{i=q+p}^{(L_t-1)q+1}
\left(\sum_{j=0}^{p-1}\left[\Delta_q X_{t_{i+j}}-\Delta_qX_{t_{i+q+j}}\right]\right)\eps_{t_i}\right|\mathcal{F}_1\right)\notag\\
&=\frac{4\ell\sigma^2_\eps}{q^2p^2}\sum_i\left(\sum_{j=0}^{p-1}\left[\Delta_q X_{t_{i+j}}-\Delta_qX_{t_{i+q+j}}\right]\right)^2\notag\\
&=\frac{4\ell\sigma^2_\eps}{q^2p^2}\sum_i\sum_{j=0}^{p-1}\left((\Delta_q X_{t_{i+j}})^2+(\Delta_qX_{t_{i+q+j}})^2\right)\notag
-\frac{8\ell\sigma^2_\eps}{q^2p^2}\sum_i\sum_{j=0}^{p-1}\Delta_q X_{t_{i+j}}\Delta_qX_{t_{i+q+j}}\notag\\
&~~~+\frac{8\ell\sigma^2_\eps}{q^2p^2}\sum_i\sum_{0\leq j<k\leq p-1}\left[\Delta_q X_{t_{i+j}}-\Delta_qX_{t_{i+q+j}}\right]
\left[\Delta_q X_{t_{i+j}}-\Delta_qX_{t_{i+q+k}}\right]\\
:&= V_1 + V_2 + V_3.
\end{align}
We have, firstly, by applying Lemma \ref{lem:rnd_sum} and using the fact that $E(\Delta_qX_{t_i})^2\leq Cq/n^{1-\eta}$ for all $i,$
\[
EV_1
\leq\frac{C\ell}{q^2p^2} \cdot n\cdot p \frac{q}{n^{1-\eta}}\notag
=\frac{C\ell}{qpn^{-\eta}}\rightarrow0.\notag
\]
This, together with the Cauchy-Schwartz inequality, imply that $E|V_2|\leq C\ell/(pqn^{-\eta})\to 0.$
Finally, using the Cauchy-Schwartz inequality again we have
$$
E\left|V_3\right| \leq
C p E(V_1)
\leq C\ell/(qn^{-\eta})\to 0.
$$

Second, for $\sqrt{\ell}/q\sum_{k=0}^{q-1}[\bar\eps,\bar\eps]_t^{\mathcal{S}_k}$, following the way the terms of $2\sqrt{\ell}/q\sum_{k=0}^{q-1}[X,\bar\eps]_t^{\mathcal{S}_k}$ were rearranged, we have
\begin{align}
\frac{\sqrt{\ell}}{q}\sum_{k=0}^{q-1}[\bar\eps,\bar\eps]_t^{\mathcal{S}_k}
&=\frac{\sqrt{\ell}}{pq}\sum_{t_i\leq t}\left(-2\sum_{j=0}^{p-1}\frac{p-j}{p}\eps_{t_{i-q-j}}-2\sum_{j=1}^{p-1}\frac{p-j}{p}\eps_{t_{i-q+j}}+4\sum_{j=1}^{p-1}\frac{p-j}{p}\eps_{t_{i-j}}\right)\eps_{t_i}\notag\\
&~~~+\frac{2\sqrt{\ell}}{pq}\sum_{t_i\leq t}\eps^2_{t_i}+o_p(1) \notag\\
:&=M^{(4)}_t + \frac{2\sqrt{\ell}}{pq}\sum_{t_i\leq t}\eps^2_{t_i}+o_p(1).\notag
\end{align}
where the $o_p(1)$ term is again due to the end effect.

We first deal with $M^{(4)}_t$. We need the following notation $$J_1:=\{1,2,...,p-1\},~J_2:=\{q-p+1,q-p+2,\ldots,q-1\}~{\rm and}~J_3:=\{q,q+1,\ldots,q+p-1\}.$$ Let $J:=\bigcup_{m=1}^3J_m$
and $J_{\max}$ be the largest element in $J$. Moreover, denote the following weight function
$$
w(j)=\left\{\begin{array}{lll}
4\frac{p-j}{p}~~~{\rm for}~j\in J_1;\\
-2\frac{p-q+j}{p}~~~{\rm for}~j\in J_2;~~~{\rm and}\\
-2\frac{p+q-j}{p}~~~{\rm for}~j\in J_3.
\end{array}
\right.
$$
Notice that $|w(j)|\leq 4$ for all $j\in J$.
$M^{(4)}_t$ is a martingale with quadratic variation that can then be represented as
\begin{align}
\langle M^{(4)},M^{(4)}\rangle_t
&=\frac{\ell\sigma_\eps^2}{p^2q^2}\sum_{t_i\leq t}\left[-2\sum_{j=0}^{p-1}\frac{p-j}{p}\eps_{t_{i-q-j}}-2\sum_{j=1}^{p-1}\frac{p-j}{p}\eps_{t_{i-q+j}}+4\sum_{j=1}^{p-1}\frac{p-j}{p}\eps_{t_{i-j}}\right]^2\notag\\
&=\frac{24\ell\sigma_\eps^2}{p^4q^2}\sum_{t_i\leq t}\sum_{j=1}^{p-1}\left(p-j\right)^2\sigma_\eps^2+\frac{\ell\sigma_\eps^2}{p^2q^2}\sum_{t_i\leq t}\sum_{j\in J}w(j)^2(\varepsilon_{t_{i-j}}^2-\sigma_\varepsilon^2)\notag\\
&~~~+\frac{\ell\sigma_\eps^2}{p^2q^2}\sum_{t_i\leq t}\sum_{j,k\in J,j\neq k}w(j)w(k)\varepsilon_{t_{i-j}}\varepsilon_{t_{i-k}}\notag\\
&=\frac{24\ell\sigma_\eps^2}{p^4q^2}\sum_{t_i\leq t}\sum_{j=1}^{p-1}\left(p-j\right)^2\sigma_\eps^2+o_p(1)\notag\\
&\sim\frac{8n\ell(\sigma_\eps^2)^2}{pq^2}\frac{N_t}{n}+o_p(1)\overset{p}
\rightarrow\frac{8(\sigma_\eps^2)^2C_\ell^3}{C_p}\int_0^t r_s ds\notag
\end{align}
where the last line follows from the assumption that $L_t/\ell\overset{p}\rightarrow\int_0^tr_sds$ and the third equality is explained as follows. We take the third term on the RHS of the second equality for example while the second term can be treated more easily by a similar argument. Notice that this term can be rewritten as
\begin{align}
\mathcal{E}_t:=\frac{2\ell\sigma_\eps^2}{p^2q^2}\sum_{t_i< t^*}\left[\sum_{j=1}^{J_{\max}-1}\left(\sum _{k=1}^{J_{\max}-j}w(k)w(j+k)I_{\{k\in J,~j+k\in J\}}\right)\varepsilon_{t_{i-j}}\right]\varepsilon_{t_i}\notag,
\end{align}
where $t^*:=\max\{t_j\leq t\}.$
Hence by Lemma \ref{lem:rnd_sum}, the BDG inequality, the boundedness of $w(\cdot)$ function and the fact that the cardinality of set $J$ is of order $p$, we have
\begin{align}
E\left(\mathcal{E}_t^2\right)&\leq \frac{4\ell^2\sigma_\eps^6}{p^4q^4}E\sum_{t_i< t^*}\left[\sum_{j=1}^{J_{\max}-1}\left(\sum _{k=1}^{J_{\max}-j}w(k)w(j+k)I_{\{k\in J,~j+k\in J\}}\right)\varepsilon_{t_{i-j}}\right]^2\notag\\
&\leq C\frac{nqp^2\ell^2\sigma_\eps^8}{p^4q^4}=C\frac{n\ell^2}{p^2q^3}=o(1)\notag.
\end{align}
Hence $\mathcal{E}_t=o_p(1)$.

Therefore, based on our moment assumption for $(\eps_{t_i})_{i\geq1}$, $M^{(4)}$ satisfies a CLT where the limiting distribution is a mixture of normal and the mixture component is the variance equal to $\frac{8(\sigma_\eps^2)^2C_\ell^3}{C_p}\int_0^tr_sds$; in other words,
\begin{equation}\label{conv:M_4}
M^{(4)}_t
\Longrightarrow \int_0^t\left[\frac{8(\sigma_\eps^2)^2C_\ell^3}{C_p} r_s \right]^{1/2}dB_s,
\end{equation}
where $B_t$ is a standard Brownian motion that is independent of $\mathcal{F}_1$.

As to $2\sqrt{\ell}/(pq)\sum_i\eps^2_{t_i},$ it follows from Lemma \ref{Lem:ConvgcOfsigeps} and C(4) that
\begin{align}
&~~~\frac{2\sqrt{\ell}}{pq}\sum_{t_i\leq t}\eps^2_{t_i}-\frac{2\sqrt{\ell}}{pq}N_t\widehat{\sigma^2_\eps}\notag\\
&=\frac{2\sqrt{\ell}}{pq}\sum_{t_i\leq t}(\eps^2_{t_i}-\sigma^2_\eps)
-\frac{2\sqrt{\ell}}{pq}N_t\left(\widehat{\sigma^2_\eps}-\sigma^2_\eps\right)\notag\\
&=O_p\left(\frac{\sqrt{\ell n}}{pq}\right)=o_p(1)~{\rm in} ~D[0,1].\notag
\end{align}

\subsection*{Step 3}
Finally, we prove a CLT for term $III$. We have
\[
\overline{M}_t
:=\sqrt{\ell}\left(\frac{1}{q}\sum_{k=0}^{q-1}[X,X]^{\mathcal{S}_k}_t-\langle X,X\rangle_t\right)
=\frac{1}{q}\sum_{k=0}^{q-1}\sqrt{\ell}\left([X,X]^{\mathcal{S}_k}_t-\langle X,X\rangle_t\right)
=\frac{1}{q}\sum_{k=0}^{q-1}M^k_t,\notag
\]
where
$$
dM^k_t=2\sqrt{\ell}(X_t-X_{t_*^k})dX_t
$$
and $t_*^k$ is the largest time smaller than or equal to $t$ on the $k$th sub-grid.
Therefore
\[
\overline{M}_t
=\frac{1}{q}\sum_{k=0}^{q-1}\int_0^tdM^k_s
=\frac{2\sqrt{\ell}}{q}\sum_{k=0}^{q-1}\int_0^t(X_s-X_{t^k_*})dX_s
=2\sqrt{\ell}\int_0^tf_n(s)dX_s,
\]
where
$$
f_n(s)=\left\{
\begin{array}{ll}
    0,  ~~~{\rm for}~s\in[0,t_p);\\
    \frac{1}{q}\sum_{j=0}^{i-p}(X_s-X_{t_{i-j}}),  ~~~{\rm for~} s\in[t_i,t_{i+1})~{\rm and~}p\leq i<q+p; \\
    X_s-X_{t_i}+\sum_{j=1}^{q-1}\frac{q-j}{q}\Delta X_{t_{i-j+1}},~~~{\rm for~}s\in[t_i,t_{i+1})~{\rm and~}i\geq q+p.
\end{array}
\right.  $$
$\overline{M}_t$ is a martingale with quadratic variation
$
\langle \overline{M},\overline{M}\rangle_t=4\ell\int_0^tf_n(s)^2\sigma_s^2 ds.
$
Since $Ef_n(s)^2\leq Cq/n^{1-\eta}$,
\begin{align}
4\ell\int_0^{t_{q+p}}f_n(s)^2\sigma^2_sds
=o_p(1),\label{eq:atypeofendeffect}
\end{align}
Hence, we need to only consider $s\geq t_{q+p}$, i.e., $i\geq q+p.$ By It\^o's formula,
$$
df_n(s)^4=4f_n(s)^3dX_s+6f_n(s)^2\sigma^2_sds, ~{\rm for}~s\in[t_i,t_{i+1})~{\rm and}~i\geq q+p.
$$
Hence
\begin{equation}\label{eq:MMquadraticvariationDecomp}
\langle \overline{M},\overline{M}\rangle_t
=4\ell\int_0^tf_n(s)^2\sigma_s^2ds
=\frac{2\ell}{3}\int_0^t df_n(s)^4
-\frac{8\ell}{3}\int_0^t f_n(s)^3 dX_s.
\end{equation}

We first prove the second term on the RHS of (\ref{eq:MMquadraticvariationDecomp}) is negligible. In fact,
by the BDG inequality,
\begin{align}
Ef_n(s)^6\leq C(q/n^{1-\eta})^3\label{eq:f_nBDGinwquality}
\end{align}
uniformly in $s$. Hence
\begin{align}
E\left(\int_0^tf_n(s)^3dX_s\right)^2&\leq E\left\langle \int_0^\cdot f_n(s)^3dX_s,\int_0^\cdot f_n(s)^3dX_s\right\rangle_t\notag\\
&\leq E\int_0^1f_n(s)^6\sigma_s^2ds\notag\\
&\leq\sigma^2_+\int_0^1Ef_n(s)^6ds\notag\\
&\leq C\int_0^1\left(\frac{q}{n^{1-\eta}}\right)^3ds=O\left(\frac{q^3}{n^{3-3\eta}}\right)\label{eq:BDGunirule},
\end{align}
and
$
\ell\int_0^tf_n(s)^3dX_s
\overset{p}\longrightarrow 0,~{\rm in}~D[0,1],~{\rm as~}n\rightarrow \infty.
$
Now we deal with the first term in (\ref{eq:MMquadraticvariationDecomp}). We shall only focus on the integral on $[t_{i_{q+p}}, t_*]$ where
$t_*$ is the largest $t_i\leq t$; the remainder term is negligible. We then have
\begin{align}
\ell \int_{t_{i_{q+p}}}^{t_*} \ df_n(s)^4
&=\ell\sum_{t_{i}\leq t}\left\{\left(\sum_{j=0}^{q-1}\frac{q-j}{q}\Delta X_{t_{i-j}}\right)^4-\left(\sum_{j=1}^{q-1}\frac{q-j}{q}\Delta X_{t_{i-j}}\right)^4\right\}\notag\\
&=\underbrace{\ell\sum_{t_{i}\leq t}(\Delta X_{t_i})^4}_{\widetilde{I}}+\underbrace{\ell\sum_{t_{i}\leq t}6(\Delta X_{t_i})^2\left(\sum_{j=1}^{q-1}\frac{q-j}{q}\Delta X_{t_{i-j}}\right)^2}_{\widetilde{II}}\notag\\
&+\underbrace{\ell\sum_{t_{i}\leq t}4(\Delta X_{t_i})^3\left(\sum_{j=1}^{q-1}\frac{q-j}{q}\Delta X_{t_{i-j}}\right)}_{\widetilde{III}}+\underbrace{\ell\sum_{t_{i}\leq t}4\Delta X_{t_i}\left(\sum_{j=1}^{q-1}\frac{q-j}{q}\Delta X_{t_{i-j}}\right)^3}_{\widetilde{IV}}\notag.
\end{align}
By the BDG inequality and Lemma \ref{lem:rnd_sum},
$
\widetilde{I}=O_p(\ell/n^{1-2\eta})=o_p(1).
$
Moreover, for term $\widetilde{IV}$, by computing its quadratic variation and using (\ref{eq:f_nBDGinwquality}) we get
$$
\widetilde{IV}=O_p(\ell q^{3/2}/n^{3/2-2\eta})=O_p(1/(\sqrt{\ell}n^{-2\eta}))=o_p(1).
$$
As to term $\widetilde{III}$, we treat it in the same fashion as above by defining for $s\in[t_{i-1},t_{i})$,
\[\aligned
\breve{X}_s&:=4\ell(X_s-X_{t_{i-1}})^3\left(\sum_{j=1}^{q-1}\frac{q-j}{q}\Delta X_{t_{i-j}}\right);\\
\breve{X}^{(1)}_s&:=12\ell(X_s-X_{t_{i-1}})^2\left(\sum_{j=1}^{q-1}\frac{q-j}{q}\Delta X_{t_{i-j}}\right);\\
\breve{X}^{(2)}_s&:=12\ell(X_s-X_{t_{i-1}})\left(\sum_{j=1}^{q-1}\frac{q-j}{q}\Delta X_{t_{i-j}}\right).
\endaligned
\]
Then
\begin{align}
\widetilde{III}=
\sum_{t_{i}\leq t}\int_{t_{i-1}}^{t_{i}}d\breve{X}_s
=\int_0^t\breve{X}^{(1)}_sdX_s+\int_0^t\breve{X}^{(2)}_s\sigma_s^2ds +o_p(1), \notag
\end{align}
which is again an $o_p(1)$ term by noting that (1) $E(\breve{X}^{(1)}_s)^2\leq C\ell^2\cdot 1/n^{2-2\eta}\cdot q/n^{1-\eta} \leq C\ell/n^{2-3\eta}$; and (2)
$E(\breve{X}^{(2)}_s)^2\leq C\ell^2\cdot 1/n^{1-\eta}\cdot q/n^{1-\eta}\leq C\ell/n^{1-2\eta}$.
Finally, by assumption C(6) we get the convergence of term $\widetilde{II}$ and hence
$$
\langle \overline{M},\overline{M}\rangle_t\overset{p}\longrightarrow4\int_0^tw_s\sigma_s^4\ ds~{\rm for~all~}t.
$$

Next, we estimate the quadratic covariation between $\overline{M}$ and $X$. To do so, we first notice that, by It\^o's formula,
\begin{align}
d\langle X,\overline{M}\rangle_t=\frac{1}{q}\sum_{k=0}^{q-1}d\langle X,M^k\rangle_t=\frac{1}{q}\sum_{k=0}^{q-1}\frac{2\sqrt{\ell}}{3}d(X_t-X_{t_*^k})^3-\frac{1}{q}\sum_{k=0}^{q-1}2\sqrt{\ell}(X_t-X_{t_*^k})^2dX_t,\label{eq:cqvXMresidu}
\end{align}
where
$$
\frac{1}{q}\sum_{k=0}^{q-1}\frac{2\sqrt{\ell}}{3}\ d(X_t-X_{t_*^k})^3
=\frac{2}{3}\frac{1}{q}\sum_{k=0}^{q-1}\sqrt{\ell}\ d[X,X,X]_t^{\mathcal{S}_k}.
$$
We next show that the martingale term in (\ref{eq:cqvXMresidu}) is negligible. Rearranging terms the same way as we did for $\overline{M}_t$, we have
\[
R_t:=\int_0^t\frac{1}{q}\sum_{k=0}^{q-1}2\sqrt{\ell}(X_s-X_{t_*^k})^2dX_s\notag
=\frac{2\sqrt{\ell}}{q}\int_0^tg_n(s)dX_s\notag,
\]
where
$$
g_n(s)=\left\{
\begin{array}{ll}
    0,  ~~~{\rm for}~s\in[0,t_p);\\
    \sum_{j=0}^{i-p}(X_s-X_{t_{i-j}})^2,  ~~~{\rm for~} s\in[t_i,t_{i+1})~{\rm and~}p\leq i<q+p; \\
    \sum_{j=0}^{q-1}(X_s-X_{t_{i-j}})^2,~~~{\rm for~}s\in[t_i,t_{i+1})~{\rm and~}i\geq q+p.
\end{array}
\right.  $$
Observe that by the Cauchy-Schwartz inequality and the BDG inequality,
\begin{align}
&~~~E\left[\sum_{j=0}^{q-1}(X_{t_{i+1}}-X_{t_{i-j}})^2\right]^2\notag\\
&=E\sum_{j=0}^{q-1}(X_{t_{i+1}}-X_{t_{i-j}})^4+2E\sum_{0\leq j<k\leq q-1}(X_{t_{i+1}}-X_{t_{i-j}})^2(X_{t_{i+1}}-X_{t_{i-k}})^2\notag\\
&\leq C\sigma_+^4q \frac{q^2}{n^{2-2\eta}}+C\sigma_+^4 q(q-1) \frac{q^2}{n^{2-2\eta}}=O\left(\frac{q^4}{n^{2-2\eta}}\right)\notag.
\end{align}
Hence, uniformly in $s\in[t_i,t_{i+1})$ and $i$,
$
Eg_n(s)^2\leq Cq^4/n^{2-2\eta}\label{eq:QCVgfunctionbound}.
$
Therefore, by the BDG inequality again,
\[
E(R_t)^2
\leq C\frac{\ell}{q^2}E\int_0^tg_n(s)^2\sigma^2_sds\notag
\leq C \frac{\ell q^4}{n^{2-2\eta}q^2}\leq C\frac{1}{\ell n^{-2\eta}}\to 0,\notag
\]
and hence
$R_t=o_p(1).$
Therefore, Assumption C(7) and (\ref{eq:cqvXMresidu}) imply that
$$
\langle X,\overline{M}\rangle_t\overset{p}\longrightarrow\frac{2}{3}\int_0^t\bar{v}_s\sigma_s^3ds~~~{\rm for~all~}t.
$$

It follows from the limit results in either Theorem B.4 (p. 65-67) of
\cite{zhang2001} or Theorem 2.28 of \cite{MZ2012} that, stably in law,
\begin{equation}\label{conv:ol_M}
\overline{M}_t
\Longrightarrow \frac{2}{3}\int_0^t\bar{v}_s\sigma_s dX_s+\int_0^t\left[\left(4w_s-\frac{4}{9}\bar{v}_s^2\right)\sigma_s^4\right]^{1/2}dB_s,
\end{equation}
where $B_t$ is a standard Brownian motion that is independent of $\mathcal{F}_1$.

\subsection*{Step 4}
Clearly $\langle M^4, \overline{M}\rangle_t = 0$. The overall results then follows from \eqref{conv:M_4} and \eqref{conv:ol_M}.

\qed












\end{document}


\begin{frontmatter}



\title{Supplement to ``Volatility Inference in the Presence of Both  Endogenous Time and Microstructure Noise''}


\author[1]{Yingying Li\fnref{label1}}
\address[1]{Department of Information Systems, Business Statistics and Operations Management, Hong Kong University of Science and Technology. Email: yyli@ust.hk}

\author[2]{Zhiyuan Zhang\fnref{label2}\corref{cor1}}
\address[2]{School of Statistics and Management, Shanghai University of Finance and Economics. Email: zhang.zhiyuan@mail.shufe.edu.cn}

\author[3]{Xinghua Zheng\fnref{label3}}
\address[3]{Department of Information Systems, Business Statistics and Operations Management, Hong Kong University of Science and Technology. Email: xhzheng@ust.hk}

\fntext[label1]{Research partially supported
by GRF 602710 of the HKSAR.}
\fntext[label2]{Research partially supported by
Shanghai Pujiang Program 12PJC051.}
\fntext[label3]{Research partially supported
by GRF 602710 and GRF 606811 of the HKSAR.}

\end{frontmatter}

\setcounter{section}{6}
\setcounter{equation}{28}
\renewcommand{\theequation}{\thesection.\arabic{equation}}

This supplementary article is dedicated to prove Proposition 1 in the main article \cite{LZZlama}. The theorem/proposition numbers below refer to the main article.

\begin{Lem}\label{Lem:skewness_conv}
Under the assumptions of Proposition 1, in $D[0,1]$,
$$
\sqrt{\ell}[\overline{Y},\overline{Y},\overline{Y}]^{\mathcal{S}}_{t}\overset{p}\longrightarrow\int_0^{t}v_s\sigma_s^3\ ds~~~{\rm and}~~~\frac{1}{q}\sum_{k=0}^{q-1}\sqrt{\ell}[\overline{Y},\overline{Y},\overline{Y}]^{\mathcal{S}_k}_{t}
\overset{p}\longrightarrow\int_0^{t}\bar{v}_s\sigma_s^3\ ds.
$$
Moreover, in $D[0,1]$,
$$
\left|\frac{1}{q}\sum_{k=0}^{q-1}\sqrt{\ell}[\overline{Y},\overline{Y},\overline{Y}]^{\mathcal{S}_k}_{t}-\int_0^{t}\bar{v}_s\sigma_s^3ds\right|/\delta_{n}\overset{p}\rightarrow0,~~~{\rm for}~~~\delta_{n}\rightarrow0~~~{\rm and}~~~1/\delta_{n}=o\left(n^{\gamma(\alpha,\eta)}\right).
$$
\end{Lem}
\begin{proof} Using the notation in Step 2 of the proof of Theorem 1, we have
\begin{align}
&~~~\sqrt{\ell}\left([\overline{Y},\overline{Y},\overline{Y}]^{\mathcal{S}}_t-[\widetilde{Y},\widetilde{Y},\widetilde{Y}]^{\mathcal{S}}_t\right)\notag\\
&=\underbrace{\sqrt{\ell}\sum_{t_{i,0}\leq t}(\Delta C_i)^3}_{i}+\underbrace{3\sqrt{\ell}\sum_{t_{i,0}\leq t}\Delta A_i(\Delta C_i)^2}_{ii}+\underbrace{3\sqrt{\ell}\sum_{t_{i,0}\leq t}(\Delta A_i)^2\Delta C_i}_{iii}\notag\\
&~~~~~~+\underbrace{3\sqrt{\ell}\sum_{t_{i,0}\leq t}\Delta B_i(\Delta C_i)^2}_{iv}+\underbrace{3\sqrt{\ell}\sum_{t_{i,0}\leq t}(\Delta B_i)^2\Delta C_i}_{v}+\underbrace{6\sqrt{\ell}\sum_{t_{i,0}\leq t}\Delta A_i\Delta B_i\Delta C_i}_{vi}\notag,
\end{align}
and
\begin{align}
\sqrt{\ell}[\widetilde{Y},\widetilde{Y},\widetilde{Y}]^{\mathcal{S}}_t&=\sqrt{\ell}\sum_{t_{i,0}\leq t}(\Delta A_i)^3+\underbrace{\sqrt{\ell}\sum_{t_{i,0}\leq t}(\Delta B_i)^3}_{vii}\notag\\
&~~~+\underbrace{3\sqrt{\ell}\sum_{t_{i,0}\leq t}(\Delta A_i)^2\Delta B_i}_{viii}+\underbrace{3\sqrt{\ell}\sum_{t_{i,0}\leq t}\Delta A_i(\Delta B_i)^2}_{ix}\notag.
\end{align}
Observe that the leading term
$
\sqrt{\ell}\sum_{t_{i,0}\leq t}(\Delta A_i)^3=\sqrt{\ell}[X,X,X]_t^{\mathcal{S}}.
$
Applying the same techniques used in the proof of Theorem 1 or 2, we can show that with the tedious calculation being omitted
\begin{align}
&E|i|\leq C\left(\frac{\ell p}{n^{1-\eta}}\right)^{3/2},~~~E|ii|\leq C\frac{\ell p}{n^{1-3\eta/2}},~~~E|iii|\leq C\sqrt{\frac{\ell p}{n^{1-3\eta}}},\notag\\
&E|iv|\leq C\frac{\ell\sqrt{p}}{n^{1-\eta}},~~~E|v|\leq C\frac{\ell}{\sqrt{pn^{1-\eta}}},~~~E|vi|\leq C\sqrt{\frac{\ell}{n^{1-2\eta}}}\notag,\\
&E|vii|\leq C\frac{\ell}{p^{3/2}},~~~E|viii|\leq C\frac{1}{p^{1/2}n^{-\eta}},~~~{\rm and}~~~E|ix|\leq C\frac{\ell^{1/2}}{pn^{-\eta/2}}\notag.
\end{align}
The first half of the lemma follows by noticing that all bounds above are $o(1)$ given our assumptions. The second half of the lemma follows by the assumption C(7') and a careful examination of the orders of the above bounds.
\end{proof}

\begin{Lem}\label{Lem:F_f}
Suppose that there exists a (nonrandom) sequence $(\delta_n)_{n\geq 1}$ with $\delta_n\rightarrow 0$ such that
$$|F_n(s)-F(s)|/\delta_n\overset{p}\rightarrow 0~{\rm in~}D[0,1],~{\rm where~}F(t):=\int_0^tf(s)ds,$$
and $f(t)$ is continuous a.s.. Suppose further that $(\tau_i)_{i\geq1}$ is a (possibly random) partition over $[0,1]$ with
$\widetilde{\Delta}_n:=\max_i|\tau_{i}-\tau_{i-1}|=o_p(1)$ and
\begin{equation}\label{eq:blocking}
\delta_n/\min_i|\tau_{i}-\tau_{i-1}|=O_p(1).
\end{equation}
Define
\begin{equation}\label{dfn:F_f}
f_n(t):=(F_n(\tau_{i})-F_n(\tau_{i-1}))/(\tau_{i}-\tau_{i-1}),~{\rm for}~t\in[\tau_{i},\tau_{i+1}).
\end{equation}
Then
$
f_n(t)\overset{p}\longrightarrow f(t)~in~D[0,1].
$
\end{Lem}
\begin{proof}
Since we are proving convergence in probability, without loss of generality we can act as if \eqref{eq:blocking} is
reinforced to
\[
\min_i|\tau_{i}-\tau_{i-1}| \geq \frac{\delta_n}{C}.
\]
Now notice that for $t\in[\tau_{i},\tau_{i+1})$ and all $i$
\begin{align}
&~~~~~~|f_n(t)-f(t)|=\frac{1}{\tau_{i}-\tau_{i-1}}|F_n(\tau_{i})-F_n(\tau_{i-1})-f(t)(\tau_{i}-\tau_{i-1})|\notag\\
&\leq\underbrace{\frac{C}{\delta_n}|F_n(\tau_{i})-F_n(\tau_{i-1})-(F(\tau_{i})-F(\tau_{i-1}))|}_{I}\notag\\
&~~~+\underbrace{\frac{1}{\tau_{i}-\tau_{i-1}}|F(\tau_{i})-F(\tau_{i-1})-f(t)(\tau_{i}-\tau_{i-1})|}_{II}\notag.
\end{align}
Since the uniform topology is identical to the Skorokhod topology regarding convergence to a continuous function on $[0,1]$
(see, e.g., pp.328 of \citet{JacodShiryaev}),
we can treat $I$ and $II$ respectively as follows:
\[
\aligned
I&\leq\frac{C}{\delta_n}|F_n(\tau_{i})-F(\tau_{i})|+\frac{C}{\delta_n}|F_n(\tau_{i-1})-F(\tau_{i-1})|
\leq\frac{2C}{\delta_n}\sup_{s\in[0,1]}|F_n(s)-F(s)|\to 0,\q\mbox{and}\\
II&\leq\frac{1}{\tau_{i}-\tau_{i-1}}\int_{\tau_{i-1}}^{\tau_{i}}|f(s)-f(t)|ds\leq C_{n}(f)\to 0,\notag
\endaligned
\]
where $C_{n}(f):=\sup_{|s-t|\leq2\widetilde{\Delta}_n}|f(s)-f(t)|\overset{p}\rightarrow0$, by the \hbox{a.s.} continuity of $f(t)$ on the compact interval $[0,1]$ and that $\widetilde{\Delta}_n=o_p(1)$.
\end{proof}

Observe that we intentionally shift the time when we define $f_n(t)$ in Lemma \ref{Lem:F_f}, so that it is adapted to the filtration $(\mathcal{F}_t).$

\begin{Cor}\label{Cor:conv_spot_vol}
Under the assumptions of Theorem 2 and that $\sigma_t^2$ is a.s. continuous. If $\delta_n$ is chosen such that $1/\delta_n=o(n^{\alpha/2}),$ then $f_n^{(2)}(t)\overset{p}\longrightarrow f^{(2)}(t)$ in $D[0,1].$
\end{Cor}
\begin{proof}
This is a direct consequence of Theorem 2 and Lemma \ref{Lem:F_f}.
\end{proof}

\begin{Lem}\label{Lem:conv_prod_ratio}
Suppose
\[
 f_n(t)\overset{p}\longrightarrow f(t), \q\mbox{and}\q
 g_n(t)\overset{p}\longrightarrow g(t),\q\mbox{both  in }D[0,1],
\]
where $f(t),g(t)$ are continuous a.s.. Moreover, suppose that, almost surely, $|f(t)|\leq b$ and $1/b\leq|g(t)|\leq b$ for some $b>0$ and all $t\in[0,1]$. Then
$$
\frac{f_n(t)}{g_n(t)}\overset{p}\longrightarrow\frac{f(t)}{g(t)}~ in~D[0,1],~and~f_n(t)g_n(t)\overset{p}\longrightarrow f(t)g(t)~ in~D[0,1].
$$
\end{Lem}
\begin{proof}
The proof is straightforward and hence omitted.
\end{proof}

Now we are ready to prove Proposition 1.
\begin{proof}[Proof of Proposition 1]
Lemma \ref{Lem:skewness_conv} and Corollary \ref{Cor:conv_spot_vol} state that Lemma \ref{Lem:F_f} applies for $F^{(3)}_n$ and $F_n^{(2)}$ with $1/\delta_n=o\left(n^{\gamma(\alpha,\eta)}\right)$. This together with Lemma \ref{Lem:conv_prod_ratio} and assumptions made in the proposition imply that
$$
\frac{f^{(3)}_n(t)}{f^{(2)}_n(t)}\overset{p}\longrightarrow\frac{f^{(3)}(t)}{\sigma^2_t}=\bar{v}_t\sigma_t,~{\rm in}~D[0,1].
$$
Therefore, it suffices to show that if $f_n\overset{p}\longrightarrow f~{\rm in}~D[0,1]$ and $f$ is continuous on $[0,1]$ a.s., then
\begin{equation}\label{stoc_int_barY_same}
\sum_{\tau_{i}\leq t}f_n(\tau_{i-1})\Delta \overline{Y}_{\tau_i}\overset{p}\longrightarrow\int_0^tf(s)\ dX_s~{\rm in}~D[0,1].
\end{equation}

To prove \eqref{stoc_int_barY_same}, note that $f_n\overset{p}\longrightarrow f~{\rm in}~D[0,1]$ and the continuity of $f$ imply $\sup_t|f_n(t) - f(t)|\toop~0 $. Hence as we are only proving convergence in probability, we can without loss of generality assume that $\sup_t|f_n(t)|\leq C$ for some $C>0$ and for all $n$.
Next, according to the definition of $\Delta\overline{Y}_{\tau_i}$, the above summation can be rewritten as
$$
\sum_{\tau_{i}\leq t}f_n(\tau_{i-1})\Delta \overline{Y}_{\tau_i}=\underbrace{\sum_{\tau_{i}\leq t}f_n(\tau_{i-1})\Delta \overline{X}_{\tau_i}}_{I(t)}+\underbrace{\sum_{\tau_{i}\leq t} f_n(\tau_{i-1})\Delta\bar{\eps}_{\tau_i}}_{II(t)},
$$
where
\[\aligned
\Delta\overline{X}_{\tau_i}&:=\frac{1}{p}\sum_{j=0}^{p-1}X_{t_{id_1q+p-j}}-\frac{1}{p}\sum_{j=0}^{p-1}X_{t_{(i-1)d_1q+p-j}},\\ \Delta\bar{\eps}_{\tau_i}&:=\frac{1}{p}\sum_{j=0}^{p-1}\eps_{t_{id_1q+p-j}}-\frac{1}{p}\sum_{j=0}^{p-1}\eps_{t_{(i-1)d_1q+p-j}}.
\endaligned\]
Firstly, thanks to the Doob's $L^p$ inequality in Appendix A.1 in \cite{LZZlama}, since
\begin{align}
E(II(1)^2)&\leq 4E\sum_{\tau_{i}\leq 1}f_n(\tau_{i-1})^2\frac{\sigma^2_\eps}{p}\leq C\frac{n}{pd_1q}=o(1),\notag
\end{align}
$II(t)$ is $o_p(1)$ in $D[0,1]$.
Secondly,
$$
I(t)=\sum_{\tau_{i}\leq t}f_n(\tau_{i-1})(X_{t_{\tau_i}}-X_{t_{\tau_{i-1}}})+
\sum_{\tau_{i}\leq t}f_n(\tau_{i-1})\left(\sum_{j=1}^{p}\frac{p-j+1}{p}\Delta X_{t_{id_1q+j}}
- \sum_{j=1}^{p}\frac{p-j+1}{p}\Delta X_{t_{(i-1)d_1q+j}}\right).
$$
By Theorem VI.6.22 (c) of \citet{JacodShiryaev}, the first term in the RHS of the above equation converges in probability (in $D[0,1]$) to the target $\int_0^tf(s)\ dX_s$.
Finally, the second term above is $o_p(1)$ in $D[0,1]$ due to that
\[\aligned
E\left(\sum_{\tau_{i}\leq t}f_n(\tau_{i-1})\sum_{j=1}^{p}\frac{p-j+1}{p}\Delta X_{t_{id_1q+j}}\right)^2
&\leq C\frac{1}{n^{1-\eta}}E\sum_{\tau_{i}\leq t}f_n(\tau_{i-1})^2 \sum_{j=1}^{p}\left(\frac{p-j+1}{p}\right)^2 \\
&\leq C\frac{p}{d_1qn^{-\eta}}=o(1),
\endaligned\]
and similarly for the other summand.
\end{proof}